\newcommand{\nca}{\texttt{nca}}
\newcommand{\lungh}{w}
\newcommand{\dist}{\mathrm{dist}}
\newtheorem{theorem}{Theorem}
\newtheorem{property}[theorem]{Property}
\newtheorem{lemma}[theorem]{Lemma}
\newtheorem{corollary}[theorem]{Corollary}
\newtheorem{remark}[theorem]{Remark}
\newtheorem{definition}[theorem]{Definition}
\newcommand{\qed}{\hfill \ensuremath{\Box}}
\newenvironment{proof}{\vspace{1ex}\noindent{\bf Proof.}\hspace{0.5em}}
	{\hfill\qed\vspace{2ex}}
\newcommand{\Left} {\text{Left}}
\newcommand{\Right} {\text{Right}}
\begin{document}

\title{A Linear Time Algorithm for Computing Max-Flow Vitality\\ in Undirected Unweighted Planar Graphs}

\author{Giorgio Ausiello\footnote{Dipartimento di Ingegneria Informatica, Automatica e Gestionale, Universit\`a di Roma ``La Sapienza'', via Ariosto 25, 00185 Roma, Italy. Email: \texttt{ausiello@diag.uniroma1.it, ribichini@diag.uniroma1.it}.}
\and 
Lorenzo Balzotti\footnote{Dipartimento di Scienze di Base e Applicate per l’Ingegneria, Universit\`a di Roma ``La Sapienza'',
Via Antonio Scarpa, 16, 00161 Roma, Italy. Email: \texttt{lorenzo.balzotti@uniroma1.it}.}
\and
Paolo G. Franciosa\footnote{Dipartimento di Scienze Statistiche, Universit\`a di Roma ``La Sapienza'',
piazzale Aldo Moro 5, 00185 Roma, Italy. Email: \texttt{paolo.franciosa@uniroma1.it, isabella.lari@uniroma1.it}.} 
\and 
Isabella Lari\footnotemark[3]
\and 
Andrea Ribichini\footnotemark[1]
}

\date{}
\maketitle

\begin{abstract}
The vitality of an edge in a graph with respect to the maximum flow between two fixed vertices $s$ and $t$ is defined as the reduction of the maximum flow value caused by the removal of that edge.

The max-flow vitality problem  has already been efficiently solved for $st$-planar graphs but has remained open for general planar graphs. For the first time our result provides an optimal solution for general planar graphs although restricted to the case of unweighted planar graphs.
\end{abstract}

\textbf{keywords}{: maximum flow, minimum cut, vitality, planar graphs, undirected graphs}

\section{Introduction and related work}\label{se:intro}

Given a graph $G$, where each edge $e$ is allowed to carry a maximum amount $c(e)$ of \emph{flow}, and given two special vertices $s,t \in G$, the maximum flow (max-flow, for short) problem consists in determining a flow assignment to each edge so that the total flow from $s$ to $t$ is maximized. The flow assignment is subject to two constraints: the capacity bound for each edge, given by $c(e)$, and the conservation constraint for each vertex $v$ other than $s$ and $t$, stating that the total flow entering into $v$ must equal the total flow exiting from $v$.

A very wide literature has been produced since the 1950's about algorithms for computing the maximum flow value and/or a max-flow assignment for general graphs and restricted graph classes. We refer the reader to~\cite{ahuja-magnanti} for classical results and to~\cite{king-rao,orlin} for recent efficient max-flow algorithms for general graphs.

In the case of  undirected planar graphs, Reif~\cite{reif} proposed a divide and conquer approach for computing a minimum $st$-cut (hence the maximum flow value by the well known duality theorem between min-cut and max-flow) in $O(n\log^2n)$ worst-case time, where $n$ is the number of vertices in the graph. By plugging in the linear time single-source shortest path (SSSP) tree algorithm for planar graphs by Henzinger \emph{et al.}~\cite{henzinger}, this bound can be improved to $O(n\log n)$. The best currently known approach for computing a minimum $st$-cut and a max-flow assignment in planar undirected graphs is due to Italiano \emph{et al.}~\cite{italiano}, and it achieves $O(n\log \log n)$ time by a two phase approach that exploits the algorithm by Hassin and Johnson~\cite{hassin-johnson}, that proposed an $O(n\log^2 n)$ worst-case time for computing a maximum flow assignment through a planar SSSP tree. Also in this case, the time bound can be improved to  $O(n \log n)$ by applying the algorithm in~\cite{henzinger}.
If $G$ is directed and planar, the divide and conquer approach by Reif for the undirected case cannot be applied: for this case, Borradaile and Klein~\cite{borradaile-klein} presented an $O(n\log n)$ time algorithm, based on a repeated search of left-most circulations. In the case of directed planar unweighted graphs, Eisenstat and Klein presented a linear time algorithm~\cite{eisenstat-klein}.

For directed $st$-planar graphs, i.e., graphs admitting a planar embedding with $s$ and $t$ on a same face, Hassin~\cite{hassin} proved that both the max-flow value and a max-flow assignment can be found by computing an SSSP tree in the dual graph. By applying the algorithm by Henzinger \emph{et al.}~\cite{henzinger} these problems can be solved in linear time.

\medskip
The \emph{vitality} of an edge $e$ with respect to max-flow measures the max-flow decrement observed when edge $e$ is removed from the graph.
A survey on vitality with respect to max-flow problems can be found in~\cite{ausiello-franciosa_1}.
In the same paper, it is proven that:
\begin{itemize}
\item the vitality of all edges in a general undirected graph can be computed by solving $O(n)$ max-flow instances, thus giving an overall $O(n^{2}m)$ algorithm by applying the $O(mn)$ max-flow algorithms described in~\cite{king-rao,orlin};
\item for $st$-planar graphs (both directed or undirected) the vitality of all edges can be found in optimal $O(n)$ worst-case time. The same result holds for determining the vitality of all vertices, i.e., the max-flow reduction under the removal of all the edges incident on a same vertex;
\item the problem of determining the max-flow vitality of an edge is at least as hard as computing the max-flow for the graph, both for general graphs and for the restricted class of $st$-planar graphs.
\end{itemize}
The vitality problem is left open in~\cite{ausiello-franciosa_1} for general planar graphs, both directed and undirected.

The above cited paper by Italiano \emph{et al.}~\cite{italiano} introduces a dynamic structure that allows edge insertion and deletion and answers to $st$ max-flow queries in weighted undirected planar graphs. In particular, with a preprocessing of $O(n\log n)$ worst-case time, they build a data structure that allows to compute the vitality of $h$ edges in $O(hn^{2/3}\log n^{8/3})$ worst-case time.

Balzotti and Franciosa \cite{balzotti-franciosa_3} gave an algorithm able to compute a $\delta$ additive approximation of the vitality of all edges with capacity at most $c$ in $O(\frac{c}{\delta}n+n\log\log n)$ worst-case time in weighted planar graphs. If the graph is unweighted, then, by choosing $c$ and $\delta$ both equal to 1, their algorithm computes the exact vitality of all edges in $O(n\log\log n)$ worst-case time.

\medskip
In this paper, we address the vitality problem in the case of unweighted undirected planar graphs. We propose an algorithm that computes the vitality of all edges in $O(n)$ worst-case time and space.

The paper is organised as follows: in Section~\ref{se:defs} we provide some definitions and preliminary considerations, edges with vitality 1 are characterized in Section~\ref{se:planararcs}, while our algorithm is described in Section~\ref{se:algo}. Section~\ref{se:conclusions} presents some final considerations and open problems.

\section{Definitions and preliminaries}
\label{se:defs}
Given a connected undirected  graph $G=(V,E)$ with $n$ vertices, we denote an edge $e=\{i,j\} \in E$ by the shorthand notation $ij$, and we define $\mbox{dist}(u, v)$ as the distance between vertices $u$ and $v$. We write for short $v\in G$ and $e\in G$ in place of $v\in V$ and $e\in E$, respectively.

We assume that each edge $e=ij \in E$ has an associated positive \emph{capacity} $c(e)$. 
Let $s \in V$ and $t \in V$, $s\neq t$, be two fixed vertices.
A \emph{feasible flow} in $G$ assigns to each edge $e=ij \in E$ two real values $x_{ij} \in [0,c(e)]$ and $x_{ji} \in [0,c(e)]$ such that:
$$\sum_{j:ij \in E} x_{ij} = \sum_{j:ij \in E} x_{ji}, \mbox{ for each } i \in V \setminus \{s,t\}.$$
The \emph{flow from $s$ to $t$} under a feasible flow assignment $x$ is defined as
$$F(x) = \sum_{j:sj \in E} x_{sj} - \sum_{j:sj \in E} x_{js}.$$
The $\emph{maximum flow}$ from $s$ to $t$ is the maximum value of $F(x)$ over all feasible flow assignments $x$ and we denote it by $MF$.

An \emph{$st$-cut} is a partition of $V$ into two subsets $S$ and $T$ such that $s \in S$ and $t \in T$. The \emph{capacity of an $st$-cut} is the sum of the capacities of the edges $ij \in E$ 
such that $|S \cap \{i,j\}|=1$ and $|T \cap \{i,j\}|=1$.
The well known Min-Cut Max-Flow theorem \cite{ford-fulkerson_1} states that the maximum flow from $s$ to $t$ is equal to the capacity of a minimum $st$-cut for any weighted graph $G$.

\begin{definition}
The \emph{vitality} $\mbox{vit}(e)$ of an edge $e$ with respect to the maximum flow from $s$ to $t$, according to the general concept of vitality in \cite{dagstuhl}, is defined as the maximum flow in $G$ minus the maximum flow in $G'=(V,E \setminus \{e\})$.
\end{definition}

The \emph{dual} of a planar embedded undirected graph $G$ is an undirected planar multigraph $G^*$, whose vertices correspond to faces of $G$ and such that for each edge $e$ in $G$ there is an edge $e^{*} = \{f^{*},g^{*}\}$   in $G^{*}$, where $f^{*}$ and $g^{*}$ are the vertices corresponding to the two faces $f$ and $g$ adjacent to $e$ in $G$. The length $\lungh(e^{*})$ of $e^{*}$ equals the capacity of $e$. 

A path from $a$ to $b$ is called an $ab$-path. Given two vertices $x,y$ in a simple path $p$ we denote the subpath of $p$ from $x$ to $y$ by $p(x,y)$ (possibly $x=y$). Given an $ab$-path $p$ and a $bc$-path $q$, we define $p\circ q$ as the (possibly not simple) $ac$-path obtained by the union of $p$ and $q$. 

We are given an $ab$-path $\pi$ in a planar embedded graph and a vertex $v$ of $\pi$ different from $a$ and $b$. Fixing an orientation of $\pi$, for example from $a$ to $b$, we can say that every edge $\{v,w\}$ such that $w$ does not belong to $\pi$ \emph{lies to the left} or \emph{lies to the right} of $\pi$.

Given two paths $\pi_{1}, \pi_{2}$ in a planar embedded graph, a \emph{crossing} between $\pi_{1}$ and $\pi_{2}$ is a subpath of $\pi_{1}$ defined by vertices $v_{1}, v_{2}, \ldots, v_{k}$, with $k \geq 3$, such that the subpath $\pi_1(v_2,v_{k-1})=\pi_2(v_2,v_{k-1})$, and an orientation of $\pi_2$ exists such that edge $v_{1} v_{2}$ lies to the  left of $\pi_{2}$ and edge $v_{k-1} v_{k}$ lies to the right of $\pi_{2}$. Vertices $v_{2}, v_{3}, \ldots, v_{k-1}$ are called \emph{internal vertices} of the crossing. 
Fixing an orientation of $\pi_{1}$, we can define each crossing as \emph{down} (resp., \emph{up}), if the first edge in the crossing lies to the left (resp., right) of $\pi_{2}$, where terms ``up'' and ``down'' are only introduced to distinguish the type of crossing. We say $\pi_{1}$ \emph{crosses $\pi_{2}$ $t$ times} if there are $t$ different crossings between  $\pi_{1}$ and $\pi_{2}$. In a similar way, we can define a crossing between a cycle and a path.

We recall standard union and intersection operators on graphs, and for convenience we define the difference of two graphs as follows.

\begin{definition}
Given two undirected graphs $G=(V(G),E(G))$ and $H=(V(H),E(H))$, we define the following operations and relations:
\begin{itemize}\itemsep0em
\item $G\cup H=(V(G)\cup V(H),E(G)\cup E(H))$,
\item $G\cap H=(V(G)\cap V(H),E(G)\cap E(H))$,
\item $G\setminus H=(V(G),E(G)\setminus E(H))$,
\item $H\subseteq G\Longleftrightarrow V(H)\subseteq V(G)$ $\wedge$ $E(H)\subseteq E(G)$.
\end{itemize}
\end{definition}

\section{Characterizing edges with vitality 1 in the dual graph}
\label{se:planararcs}

We first observe that in graphs with integer edge capacities, the maximum flow value always is an integer number~\cite{ahuja-magnanti}, and edge vitality is defined as the difference between two maximum flows; since the vitality of an edge cannot exceed its capacity, we have:

\begin{remark} In an unweighted graph, both directed and undirected, edges can only have vitality 0 or 1.
\end{remark}

\noindent By the Min-Cut Max-Flow theorem, an edge has vitality 1 if and only if it appears in at least one minimum $st$-cut (i.e., it is essential to achieve maximum flow, for all flow assignments).

We take $G$ to be an undirected planar graph. We assume a planar embedding of the graph is fixed, and in the dual graph $G^*$ defined by this embedding we fix a face $f_{s}^{*}$ adjacent to $s$  and a face $f_{t}^{*}$  adjacent to $t$.
A cycle in $G^{*}$ that separates face $f_{s}^{*}$ from face $f_{t}^{*}$ is called an \emph{$st$-separating cycle}. Note that all the following results also hold in the general case of positive edge capacities.

\begin{property}[see~\cite{itai-shiloach} and Propositions 1 and 2 in~\cite{reif}]\label{pro:cycle}
Let $G$ be an undirected planar graph. An $st$-cut (resp., minimum $st$-cut) in $G$ corresponds to a cycle (resp., shortest cycle) in $G^{*}$ that separates face $f_{s}^{*}$ from face $f_{t}^{*}$.
\end{property}

We propose an $O(n)$ time algorithm for computing the vitality of all edges in an unweighted undirected planar graph. Bearing in mind Property~\ref{pro:cycle}, edges with vitality 1 can be found by determining which edges, in the dual graph, belong to at least one shortest $st$-separating cycle. 

Our algorithm proceeds as follows: in a first phase we compute the dual graph $G^*$ of $G$ according to the fixed embedding, then we ``cut'' $G^*$ by doubling a shortest path $\pi$, joining $f^*_s$ to $f^*_t$, obtaining a doubled dual graph $D$. It is known \cite{itai-shiloach} that shortest $st$-separating cycles in $G^*$ correspond to shortest paths in $D$ joining vertices that were doubled when building $D$, and have distance equal to the max-flow value from $s$ to $t$. We show that it is possible to find a shortest path for each such pair in $O(n)$ overall time, finding a set $U$ of non-crossing shortest paths.

In a second phase we slice $D$ along paths in $U$, and we show that each edge belonging to a shortest $st$-separating cycle can be found by exploring a single slice. This consideration allows us to find all edges belonging to a minimum $st$-separating cycle in $O(n)$ worst-case time. Some technicalities are needed in order to deal with edges contained in slices' borders and to handle some degenerate cases, namely slices containing bridges.

Let $\pi$  be a shortest path in $G^{*}$ from $f_{s}^{*}$ to $f_{t}^{*}$. We call an $st$-separating cycle $C$  \emph{multiple-crossing} or \emph{single-crossing} according to the number of crossings between $C$  and $\pi$.

Let $C$ be a multiple-crossing shortest $st$-separating cycle, and let $L $ be the set of all internal vertices of all crossings between $C$ and $\pi$, thus vertices in $L$ are both in $C$ and in $\pi$.
We first show, in Lemma~\ref{le:decomposition}, that vertices in $L$ appear in $C$ and $\pi$ in the same order, then, in Lemma~\ref{le:crossings}, we show that each edge in a shortest $st$-separating cycle is also contained in some single-crossing shortest $st$-separating cycle.

\begin{lemma}\label{le:decomposition}
Any multiple-crossing shortest $st$-separating cycle $C$ contains exactly one pair of vertices $a,b$ that split $C$ into two paths $C(a,b)$, $C'(a,b)$ such that:
\begin{enumerate}\itemsep0em
\item\label{nocrossing}  path $C(a,b)$ in $C$ joining $a$ and $b$ does not contain any vertex in $L$ other than $a$ and $b$;
\item\label{shortest} path $C'(a,b)$ is a shortest $ab$-path;
\item\label{order} all vertices in $L$  appear walking on $C'(a,b)$ in the same order, or the reversal, as they appear walking on $\pi$;
\item\label{not-out} all vertices in $L$  belong to $\pi(a,b)$.
\end{enumerate}
\end{lemma}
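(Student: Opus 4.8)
The plan is to analyze the structure of the crossings between a multiple-crossing shortest $st$-separating cycle $C$ and the fixed shortest path $\pi$ from $f_s^*$ to $f_t^*$, and extract the pair $a,b$ as the ``extreme'' internal vertices of all crossings along $\pi$. Concretely, since every vertex of $L$ lies on $\pi$, order $L$ by the position along $\pi$ (say from $f_s^*$ to $f_t^*$); let $a$ be the first such vertex and $b$ the last. The cycle $C$ is split by $a$ and $b$ into two arcs; I would set $C(a,b)$ to be the arc that avoids the ``interior'' portion $\pi(a,b)$ as much as possible and $C'(a,b)$ the other arc, then argue each of the four properties in turn.

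For item~\ref{not-out}, I would argue that if some internal vertex of a crossing lay on $\pi$ outside $\pi(a,b)$, it would contradict the choice of $a,b$ as the first and last elements of $L$ along $\pi$ — so this is essentially true by construction, modulo checking that $L$ is totally ordered along $\pi$ (which it is, being a subset of the vertex set of the simple path $\pi$). For item~\ref{order}, the key point is that $C'$ is the arc of $C$ carrying all the crossing-interiors; if two vertices of $L$ appeared in opposite relative orders on $C'$ and on $\pi$, then the portions of $C'$ and $\pi$ between them would form a crossing pattern whose parity forces an extra crossing, and one could short-cut $C$ using a subpath of $\pi$ (which has the same length as the corresponding dual subpath, both being shortest) to obtain a shorter or equally short $st$-separating cycle that crosses $\pi$ fewer times — iterating drives down the crossing count, so on a genuinely multiple-crossing shortest cycle the orders must agree. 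For item~\ref{shortest}, I would use the same exchange argument: if $C'(a,b)$ were not a shortest $ab$-path, replace it by a shortest one; the result is still a cycle of length $\le |C|$ separating $f_s^*$ from $f_t^*$ (one must check separation is preserved, which follows because $a,b$ and the homotopy class are unchanged relative to $f_s^*,f_t^*$), contradicting that $C$ is a shortest $st$-separating cycle. Item~\ref{nocrossing} then follows because any vertex of $L$ on $C(a,b)$ strictly between $a$ and $b$ would, together with $a$ or $b$, again create a crossing on the ``wrong'' side, contradicting that $a$ and $b$ were chosen as the extreme crossing-interior vertices — or it would make $C$ visit a vertex of $\pi$ twice on the same arc, violating simplicity of $C$.

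Finally, I would establish uniqueness of the pair $\{a,b\}$: any valid pair must consist of the first and last vertices of $L$ along $\pi$ by item~\ref{not-out} combined with item~\ref{nocrossing}, so the pair is forced. The main obstacle I anticipate is item~\ref{order} together with the careful bookkeeping needed for item~\ref{shortest}: making the ``uncross and short-cut along $\pi$'' exchange argument rigorous requires showing that (i) splicing in a subpath of $\pi$ keeps the cycle $st$-separating, and (ii) the operation strictly decreases either length or crossing number without ever increasing length, so that minimality of $|C|$ is contradicted in the non-degenerate (multiple-crossing) case. Handling the case where the spliced cycle becomes non-simple — repeated vertices forcing a further decomposition into a shorter separating subcycle — is the delicate technical point, and I would isolate it as the core lemma of the argument.
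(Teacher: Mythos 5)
Your construction starts from a different pair $a,b$ (the extremes of $L$ along $\pi$) than the paper's, and that choice leaves the central claims unsupported. The paper's proof hinges on a parity argument you never invoke: by the Jordan curve theorem $C$ crosses $\pi$ an odd number of times, so walking along $C$ there must be two \emph{consecutive} crossings $X_1,X_2$ of the same type (both up or both down); $a$ and $b$ are taken as the extreme internal vertices of $X_1\cup X_2$. This immediately yields an arc $C(a,b)$ containing no other crossing (point~\ref{nocrossing}) and, crucially, makes $C(a,b)\cup\pi(a,b)$ a \emph{single-crossing}, hence $st$-separating, cycle; comparing its length with $|C|$ gives $|C'(a,b)|\le|\pi(a,b)|=\dist(a,b)$, which is point~\ref{shortest}. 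With your definition of $a,b$ there is no reason yet why either arc of $C$ between $a$ and $b$ should avoid $L\setminus\{a,b\}$ --- the crossings could a priori be distributed over both arcs unless the order property is already known --- so your argument for point~\ref{nocrossing} is circular. Likewise your exchange argument for point~\ref{shortest} (splice in an arbitrary shortest $ab$-path and claim separation is preserved ``because the homotopy class is unchanged'') assumes exactly what needs proof: an arbitrary shortest $ab$-path can lie on the wrong side and destroy separation. The paper avoids this by splicing in the specific path $\pi(a,b)$ and certifying separation by counting crossings of the resulting cycle with $\pi$ (exactly one).

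For point~\ref{order} the paper does not uncross anything: once $C'(a,b)$ is known to be a shortest path, the order of any three vertices $x,y,z$ of $L$ on it is forced by $\dist(x,y)<\dist(x,z)$ and $\dist(y,z)<\dist(x,z)$, since subpaths of shortest paths are shortest. Your iterative ``short-cut along $\pi$ and decrease the crossing number'' scheme is precisely the part you yourself flag as an unproven core lemma; it faces the same separation-preservation and simplicity issues as your point~\ref{shortest}, plus a termination argument, none of which are supplied. In short, the proposal correctly locates the difficulties but is missing the two ideas that resolve them: (i) the odd-crossing-count observation producing the comparison cycle $C(a,b)\cup\pi(a,b)$, and (ii) the purely metric derivation of the order property from shortestness.
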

\begin{proof}
W.l.o.g., we fix a planar embedding of $G^*$ such that:
\begin{itemize}\itemsep0em
\item $\pi$ is a horizontal segment, and $f_{t}^{*}$ is its right endpoint;
\item $f_{t}^{*}$ is contained in the internal region defined by $C$, and $f_{s}^{*}$ is contained in the external region defined by $C$.
\end{itemize}
By Jordan's curve theorem, $C$ and $\pi$ have an odd number of crossings, since $C$ is a multiple-crossing $st$-separating cycle there are at least three crossings.
Since there is an odd number of crossings, walking along $C$ there must be at least two consecutive crossings $X_{1}, X_{2}$ that are both up (or both down).

We define $a$ (resp., $b$) as the internal vertex of $X_{1} \cup X_{2}$ that is closest to $f_{s}^{*}$ (resp., $f_{t}^{*}$). Obviously, both $a$ and $b$ belong to $\pi$.
Since there is at least one more crossing in addition to $X_{1}$ and $X_{2}$, we can define $C(a,b)$ as the $ab$-path in  $C$ that does not contain any crossing (thus proving point~\ref{nocrossing}), and $C'(a,b)$ as the $ab$-path in $C$ that contains all the internal vertices of  all crossings.

We observe that $C(a,b) \cup \pi(a,b)$ is an $st$-separating cycle, since it is a cycle and has only one crossing with $\pi$---it is formed by $\pi(a,b)$ plus the first (or the last) edge in $X_{1}$ and the last (or the first) edge in $X_{2}$.
 
Since $C(a,b) \cup \pi(a,b)$ is an $st$-separating cycle, and $C = C(a,b) \cup C'(a,b)$ is a shortest $st$-separating cycle, then $C'(a,b)$ cannot be longer than $\pi(a,b)$, which is a shortest $ab$-path. Hence, $C'(a,b)$ is a shortest $ab$-path, proving point~\ref{shortest}.

We prove now point~\ref{order}. It directly derives from the fact that both $\pi$ and $C'(a,b)$ are shortest paths.  
Let us consider three distinct vertices in $L$ that appear in $\pi$ in the order $x,y,z$, not necessarily consecutive, so that we have $\dist(x,y) < \dist(x,z)$ and $\dist(y,z) < \dist(x,z)$. Disregarding reversals, they may appear in $C'(a,b)$ in three possible orders: $(x,y,z)$, $(x,z,y)$, $(y,x,z)$, but order $(x,z,y)$ would imply $\dist(x,z) < \dist(x,y)$, while order $(y,x,z)$ would imply $\dist(x,z) < \dist(y,z)$, so the only possible order in $C'(a,b)$ is $(x,y,z)$ or its reversal.

Point~\ref{not-out} is a consequence of point~\ref{order}. In fact, if $C'(a,b)$ contained a vertex $w$ belonging to $V(\pi) \setminus V(\pi(a,b))$, then the order of the vertices belonging to $L$ would be different walking on $C'(a,b)$ and walking on $\pi$.

Finally, suppose that there exists a pair $a',b'$ of vertices having the same properties of the pair $a,b$ of points \ref{nocrossing} - \ref{not-out}. By point \ref{not-out}, we can say that all vertices in $L$ belong to $\pi(a,b)$ and also to $\pi(a',b')$. Since $a,b,a',b'$ belong to $L$, the pair $a',b'$ must coincide with the pair $a,b$. This proves the uniqueness of the pair $a,b$.
\end{proof}

\begin{corollary} \label{pr:uniqueness}
Given a multiple-crossing shortest $st$-separating cycle $C$, there exists exactly one pair of consecutive crossings of the same type (up crossing or down crossing) between $C$ and $\pi$.
\end{corollary}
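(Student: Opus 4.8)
The plan is to deduce the corollary directly from Lemma~\ref{le:decomposition}. The existence of at least one pair of consecutive crossings of the same type is in fact already contained in the proof of that lemma: by Jordan's curve theorem $C$ and $\pi$ cross an odd number of times, and recording the type (up or down) of the crossings in the cyclic order in which they occur along $C$ gives a cyclic word over $\{\mathrm{up},\mathrm{down}\}$ of odd length, which cannot be alternating, so some two cyclically consecutive crossings share their type. Thus the real content of the corollary is the \emph{uniqueness} of such a pair.

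To prove uniqueness I would set up a correspondence between pairs of consecutive same-type crossings and the pair $\{a,b\}$ whose uniqueness is asserted by Lemma~\ref{le:decomposition}. First, the construction performed in the proof of that lemma --- take $a$ (resp.\ $b$) to be the internal vertex of $X_1\cup X_2$ closest to $f_s^*$ (resp.\ $f_t^*$) along $\pi$, and split $C$ into the crossing-free arc $C(a,b)$ and the arc $C'(a,b)$ carrying all internal vertices of all crossings --- uses nothing about $X_1,X_2$ beyond their being consecutive along $C$ and of the same type; hence, applied to \emph{any} pair $(X,X')$ of consecutive same-type crossings, it returns a pair of vertices satisfying properties~\ref{nocrossing}--\ref{not-out}. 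Moreover, since the internal vertices of a crossing form a sub-path of $\pi$ and distinct crossings are edge-disjoint and so occupy disjoint sub-paths of $\pi$, one of the two vertices produced is an internal vertex of $X$ and the other an internal vertex of $X'$. Now the uniqueness clause of Lemma~\ref{le:decomposition} forces every pair of consecutive same-type crossings to produce the \emph{same} unordered pair $\{a,b\}$, and since each vertex of $L$ is an internal vertex of exactly one crossing, $\{a,b\}$ in turn determines the unordered pair of distinct crossings it came from ($a$ fixing one, $b$ the other); hence any two pairs of consecutive same-type crossings coincide. (This also rules out a run $X,X',X''$ of three consecutive same-type crossings, since $(X,X')$ and $(X',X'')$ would both yield $\{a,b\}$, forcing $a,b$ into $(X\cup X')\cap(X'\cup X'')=X'$ and contradicting that they lie in two distinct crossings.)

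I expect the delicate step to be verifying that the construction of Lemma~\ref{le:decomposition}, applied to an \emph{arbitrary} consecutive same-type pair, really does return a pair satisfying properties~\ref{nocrossing}--\ref{not-out}: one must re-inspect the argument there --- in particular the claim that $C(a,b)\cup\pi(a,b)$ crosses $\pi$ exactly once --- and confirm that it appeals only to consecutiveness and equal type, and that the construction outputs two crossings of the same type (which is precisely what makes $C(a,b)\cup\pi(a,b)$ a single-crossing cycle). Everything else is bookkeeping about how crossings sit along $\pi$.
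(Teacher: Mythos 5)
Your proposal is correct and follows essentially the same route as the paper's proof: both arguments apply the construction and the uniqueness clause of Lemma~\ref{le:decomposition} to an arbitrary pair of consecutive same-type crossings, and then use the fact that each vertex of $L$ lies in exactly one crossing (because $C$ is a simple, hence shortest, cycle) to conclude that the pair $\{a,b\}$ determines the pair of crossings. Your extra care in checking that the construction applies to an arbitrary such pair, and your explicit exclusion of a run of three same-type crossings, only make explicit what the paper leaves implicit.
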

\begin{proof}
By Lemma \ref{le:decomposition}, for each pair of consecutive crossings of the same type, there exists a unique pair of vertices $a,b$ satisfying points \ref{nocrossing} - \ref{not-out}. 
Moreover, each vertex of $L$ belongs to at most one crossing, otherwise $C$ would not be a simple cycle and then it would not be shortest. 
It follows that the vertices $a,b$ identify only one pair of consecutive crossings of the same type. Therefore, $C$ and $\pi$ have exactly one pair of consecutive crossings of the same type.
\end{proof}

The following lemma shows that in order to find all edges that belong to at least one shortest $st$-separating cycle, we can restrict our attention to single-crossing shortest $st$-separating cycles.

\begin{lemma}\label{le:crossings}
In an undirected planar graph, each edge contained in a shortest $st$-separating cycle is also contained in some single-crossing shortest $st$-separating cycle.
\end{lemma}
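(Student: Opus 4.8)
The plan is to reduce everything to the decomposition furnished by Lemma~\ref{le:decomposition}. Let $e$ lie on a shortest $st$-separating cycle $C$; if $C$ is single-crossing there is nothing to prove, so I assume $C$ is multiple-crossing and invoke Lemma~\ref{le:decomposition} to obtain the pair $a,b$ and the two $ab$-paths $C(a,b)$, $C'(a,b)$. I will use that $C'(a,b)$ and $\pi(a,b)$ are both shortest $ab$-paths, so $|C'(a,b)| = |\pi(a,b)|$; that $L$ (the internal vertices of all crossings of $C$ with $\pi$) lies on $\pi(a,b)$; and that the vertices of $L$ occur in the very same order along $C'(a,b)$ and along $\pi$ — the ``reversal'' alternative of point~\ref{order} being ruled out because $a$ precedes $b$ on $\pi$ while $a$ is the first and $b$ the last vertex of $L$ along $C'(a,b)$.

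First I record a reference single-crossing cycle. By the proof of Lemma~\ref{le:decomposition}, $C_1 := C(a,b)\cup\pi(a,b)$ is an $st$-separating cycle crossing $\pi$ exactly once, and $|C_1| = |C(a,b)| + |\pi(a,b)| = |C(a,b)| + |C'(a,b)| = |C|$, so $C_1$ is a single-crossing shortest $st$-separating cycle. If $e$ lies on $C(a,b)$, or $e$ is an edge of $\pi(a,b)$, then $e\in C_1$ and I am done. So the remaining, and main, case is $e\in C'(a,b)$ with $e$ not an edge of $\pi$.

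In that case, a one-line triangle-inequality estimate along $\pi$ (using that subpaths of $C'(a,b)$ are shortest) shows that every vertex of $C'(a,b)$ that lies on $\pi$ in fact lies on $\pi(a,b)$, and more precisely between its two nearest $L$-neighbours along $C'(a,b)$; hence $e$ lies strictly between two vertices $w,w'$ that are consecutive in $L$ along $C'(a,b)$, and $a,w,w',b$ occur in this order on $\pi$. I then graft $C'(w,w')$ into $C_1$ in place of $\pi(w,w')$:
$$ C'' := C(a,b) \,\cup\, \big(\pi(a,w)\circ C'(w,w')\circ\pi(w',b)\big). $$
Then $e\in C''$; the same triangle estimate shows $C'(w,w')$ meets $\pi(a,w)\cup\pi(w',b)$ only at $w,w'$ and meets $C(a,b)$ only where $C'(a,b)$ does, so $C''$ is a simple cycle; and $|C''| = |C(a,b)| + \dist(a,w)+\dist(w,w')+\dist(w',b) = |C(a,b)| + \dist(a,b) = |C|$, so $C''$ is shortest. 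Finally, because $w,w'$ are consecutive in $L$, no crossing of $C$ with $\pi$ has an internal vertex inside $C'(w,w')$, so $C'(w,w')$ does not cross $\pi$; consequently $C''$ and $C_1$ have the same number of crossings with $\pi$, namely one, and since it crosses an odd number of times the path $\pi$ joining $f_s^*$ to $f_t^*$, $C''$ is $st$-separating. Thus $C''$ is a single-crossing shortest $st$-separating cycle through $e$.

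The step I expect to be the real obstacle is the last one: arguing rigorously that substituting the non-crossing detour $C'(w,w')$ for $\pi(w,w')$ inside $C_1$ leaves both the parity and the exact number of crossings with $\pi$ unchanged, and hence preserves $st$-separation. This is precisely the planar/Jordan-curve bookkeeping already performed inside the proof of Lemma~\ref{le:decomposition} — keeping track of the sides of $\pi$ from which the common arc $C(a,b)$ departs $a$ and reaches $b$ — and it is also where the degenerate configurations (the detour only weakly on one side of $\pi$, or $w=a$, or $w'=b$, or $C'(w,w')$ a single edge) have to be checked one by one. Everything else is arithmetic with lengths of subpaths of the two shortest paths $\pi$ and $C'(a,b)$.
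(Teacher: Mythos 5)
Your construction is the same as the paper's: the reference cycle $C(a,b)\cup\pi(a,b)$, the two easy cases, and then the splice $C(a,b)\circ\pi(a,w)\circ C'(w,w')\circ\pi(w',b)$ where $w,w'$ are the $L$-vertices flanking $e$ along $C'(a,b)$ (the paper's $x,y$ and $D(x,y)$). The length computation and the simplicity argument via the triangle inequality are fine. But there is a genuine gap exactly where you flag ``the real obstacle'': you never prove that the spliced cycle crosses $\pi$ exactly once. Your stated reason --- that no crossing of $C$ with $\pi$ has an internal vertex strictly inside $C'(w,w')$ --- only shows the detour does not cross $\pi$ in its interior; it says nothing about what happens at $w$ and $w'$, where the detour leaves and re-enters $\pi$. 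If the first edge of $C'(w,w')$ at $w$ and the last edge at $w'$ lay on \emph{opposite} sides of $\pi$, the splice would split the unique crossing of $C_1$ (the common subpath $\pi(a,b)$, entered and exited on opposite sides via the two end-edges of $C(a,b)$) into the two subpaths $\pi(a,w)$ and $\pi(w',b)$, of which either both or neither would be crossings --- giving $0$ or $2$ crossings and hence a cycle that is \emph{not} $st$-separating. So the ``consequently'' is doing unearned work, and deferring it to ``the bookkeeping already performed inside Lemma~\ref{le:decomposition}'' does not close it, because that bookkeeping only establishes that $C_1$ itself is single-crossing.

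The missing ingredient is Corollary~\ref{pr:uniqueness}: the \emph{only} pair of consecutive crossings of $C$ with $\pi$ having the same type is the pair $X_1,X_2$ adjacent via $C(a,b)$. Hence, walking along $C'(a,b)$, consecutive crossings alternate between up and down; in particular the two crossings $X$ and $Y$ immediately flanking $e$ are of different types, which forces the two end-edges of the detour $C'(w,w')$ to lie on the \emph{same} side of $\pi$. Only then is $\pi(a,w)\circ C'(w,w')\circ\pi(w',b)$ a crossing-free excursion off one side of $\pi$, and only then does the splice preserve the single crossing inherited from $C_1$. Adding this one observation (and the degenerate checks $w=a$, $w'=b$ you already list) turns your argument into the paper's proof.
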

\begin{proof}
Let $e$ be an edge in a shortest $st$-separating cycle $C$. If $C$ is a single-crossing shortest $st$-separating cycle the thesis trivially holds. Otherwise, let us split $C$ into $C(a,b)$ and $C'(a,b)$ as defined in Lemma~\ref{le:decomposition}. 
Being $C$ a multiple-crossing $st$-separating cycle, there is an odd number of crossings, at least three, between $C$ and $\pi$.
Due to Lemma~\ref{le:decomposition}, crossings appear in the same order both in $\pi$ and in $C'(a,b)$, and only the first and the last of them are of the same type (both up crossings or both down crossings).

If $e \in C(a,b)$ or $e \in \pi(a,b)$, then $e$ is contained in the single-crossing shortest $st$-separating cycle $C(a,b) \cup \pi(a,b)$, and the thesis holds.

Otherwise, by Lemma~\ref{le:decomposition}, $e$ is contained in the portion of $C'(a,b)$ delimited by two crossings $X$ and $Y$ of different types (an up crossing and a down crossing), see Figure~\ref{fig:singlecrossing}(\subref{fig:singlecrossing_a}). Let  $D(x,y)$ be the subpath of $C'(a,b)$ such that $x,y \in \pi$ are the first internal vertices of crossings $X$ and $Y$ encountered walking from $e$ in the two directions, with no other crossings in $D(x,y)$. We observe that $D(x,y)$ consists of at least two edges, let $e_x$ and $e_y$ be the first and the last edges in $D(x,y)$, leading to $x$ and $y$.  Since $X$ is up and $Y$ is down (or vice-versa), edges $e_x$ and $e_y$ lie on the same side of $\pi$, see Figure~\ref{fig:singlecrossing}(\subref{fig:singlecrossing_b}).

\begin{figure}[h]
\captionsetup[subfigure]{justification=centering}
\centering
	\begin{subfigure}{6cm}
\begin{overpic}[width=6cm,percent]{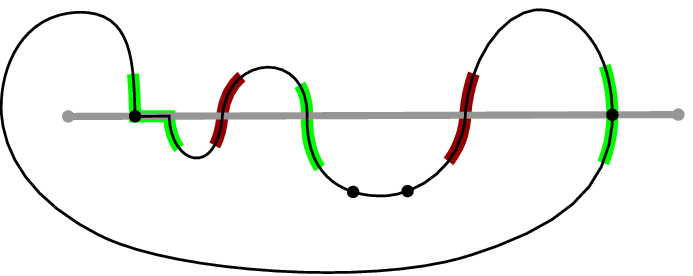}
\put(16.5,18.5){$a$}
\put(90.6,18){$b$}
\put(54,8){$e$}
\put(66.8,35){$C$}
\put(46,26){$X$}
\put(62.5,26){$Y$}
\put(79,25){$\pi$}
\end{overpic}
\caption{}\label{fig:singlecrossing_a}
\end{subfigure}	
\qquad\qquad
	\begin{subfigure}{6cm}
\begin{overpic}[width=6cm,percent]{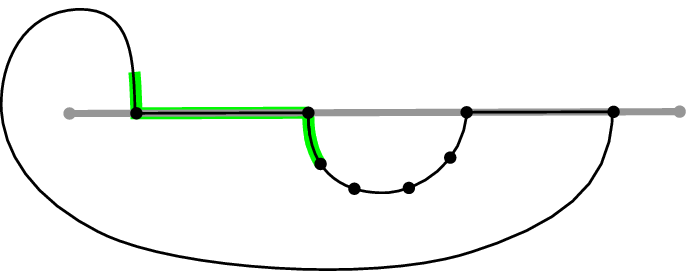}
\put(16.5,18.5){$a$}
\put(90.6,18){$b$}
\put(54,8){$e$}

\put(43.5,25){$x$}
\put(67,26){$y$}

\put(38,17){$e^x$}
\put(68,17.5){$e^y$}
\end{overpic}
\caption{}\label{fig:singlecrossing_b}
\end{subfigure}	
\caption{in (\subref{fig:singlecrossing_a}) a multiple-crossing cycle $C$, down crossings are highlighted in green and up crossings in red. Walking clockwise on $C$, $X$ is a down crossing and $Y$ is an up crossing. In (\subref{fig:singlecrossing_b}), the corresponding single-crossing $st$-separating cycle containing $e$.}
\label{fig:singlecrossing}
\end{figure}

Assume w.l.o.g.\ that $a,x,y,b$ appear in $\pi$ in this order, with possibly $a=x$ or $y=b$: path $\pi(a,x) \circ D(x,y) \circ \pi(y,b)$ is a shortest $ab$-path and does not contain any crossing with $\pi$, and cycle $C(a,b) \circ \pi(a,x) \circ D(x,y) \circ \pi(y,b)$ is a single-crossing shortest $st$-separating cycle containing edge $e$.
\end{proof}

It follows from Lemma \ref{le:crossings} that, in order to find all edges having vitality 1 we only need to refer to shortest single-crossing $st$-separating cycle in the dual graph $G^*$. Towards this goal, as in \cite{itai-shiloach,reif}, graph $G^{*}$ is ``cut'' along a shortest path $\pi$ from $f_s^{*}$ to $f_t^{*}$, obtaining the \emph{doubled dual graph} $D$, in which each vertex $f^{*}$ in $\pi$ is split into two vertices $f^{*}_{u}$ and $f^{*}_{\ell}$. Edges in $G^{*}$ incident on each $f^{*}\in V(\pi)\setminus\{f^{*}_{s},f^{*}_{t}\}$ from above $\pi$ are moved to $f^{*}_{u}$ and edges incident on $f^{*}$ from below $\pi$ are moved to $f^{*}_{\ell}$. Edges incident on $f^{*}_{s}$ or $f^{*}_{t}$ are dealt in the same way with by extending $\pi$ with two dummies edges. 
Edges on $\pi$ are copied both on $f^{*}_{u}$ and on $f^{*}_{\ell}$, so that path $\pi$ is doubled. Let $\pi_x$ be the copy of $\pi$ in $D$ formed by vertices $f^*_\ell$'s and let $\pi_y$ be the copy of $\pi$ in $D$ formed by vertices $f^*_u$'s.

It is shown in \cite{itai-shiloach,reif}, that any minimum single-crossing
$st$-separating cycle in $G^*$ corresponds in $D$ to a path from a vertex $f^*_\ell$ to a vertex $f^*_u$ having length equal to MF and viceversa. Let $k$ be the number of pairs $(f^*_\ell,f^*_u)$ such that $dist_D(f^*_\ell,f^*_u)=MF$. We denote the $i$-th pair by $(x_i,y_i)$, so that  $x_1,x_2,\ldots,x_k$ appear in this order in $\pi_x$ and $y_1,y_2,\ldots,y_k$ appear in this order in $\pi_y$.

For any $e\in E(G)$ satisfying $e^*\in\pi$ we denote by $e^D_x,e^D_y$ be the copies of $e^*$ in $D$ belonging to $\pi_x$ and $\pi_y$, respectively. Now, let $\gamma$ be an $st$-separating cycle that crosses $\pi$ exactly once. By following the same strategy of Lemma \ref{le:crossings}, if $\gamma$ uses a portion of $\pi$, then $\gamma$ corresponds in $D$ to two paths: one that uses the corresponding portion of $\pi_x$ and the other that uses the corresponding portion of $\pi_y$. Thus $e^D_x$ belongs to a shortest $x_iy_i$-path for some $i\in [k]$ if and only if $e^D_y$ belongs to a shortest $x_jy_j$-path for some $j\in [k]$.

Therefore, for our purposes we can associate to each edge of $G$ a unique edge of $D$ in the following way. If $e^*\not\in\pi$, then $e^D$ is the corresponding edge generated in $D$. Otherwise, $e^*\in\pi$ and we set $e^D=e^D_x$. The following crucial corollary is implied by the above discussion.

\begin{corollary}\label{cor:main}
Let $e$ be an edge of $G$. Then $e$ has vitality 1 if and only if $e^D$ belongs to a shortest $x_iy_i$-path in $D$, for some $i\in [k]$.
\end{corollary}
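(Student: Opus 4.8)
The plan is to assemble Corollary~\ref{cor:main} directly from the chain of equivalences already established in the excerpt, treating it as a bookkeeping statement rather than a fresh argument. First I would recall that, by the Remark together with the Min-Cut Max-Flow theorem, $e$ has vitality $1$ if and only if $e$ lies in some minimum $st$-cut of $G$; by Property~\ref{pro:cycle} this is equivalent to $e^*$ lying on some shortest $st$-separating cycle in $G^*$; and by Lemma~\ref{le:crossings} we may assume that cycle is single-crossing with respect to the fixed shortest path $\pi$. Thus the whole statement reduces to: $e^*$ lies on a single-crossing shortest $st$-separating cycle in $G^*$ if and only if $e^D$ lies on a shortest $x_iy_i$-path in $D$ for some $i\in[k]$.

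Next I would invoke the correspondence of~\cite{itai-shiloach,reif}, restated in the paragraph preceding the corollary: single-crossing shortest $st$-separating cycles in $G^*$ are in bijection with $x_i y_i$-paths in $D$ of length $MF$, i.e.\ with shortest $x_i y_i$-paths (since $\dist_D(x_i,y_i)=MF$ by definition of the pairs). The one subtlety is the behaviour of edges of $\pi$, which get duplicated into $e^D_x\in\pi_x$ and $e^D_y\in\pi_y$ when $D$ is formed. Here I would lean on the symmetry argument already spelled out in the excerpt: if a single-crossing cycle $\gamma$ uses a sub-path of $\pi$, its image in $D$ splits into one path through $\pi_x$ and one through $\pi_y$, so $e^D_x$ lies on a shortest $x_iy_i$-path for some $i$ exactly when $e^D_y$ lies on a shortest $x_jy_j$-path for some $j$. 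This is precisely what licenses the canonical choice $e^D:=e^D_x$ (and $e^D:=e^D$ when $e^*\notin\pi$): for an edge of $\pi$, membership of $e^D_x$ in a shortest $x_iy_i$-path is equivalent to the existence of a single-crossing shortest cycle through $e^*$, whichever copy that cycle happens to traverse.

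Putting the pieces in order: (i) vitality $1$ $\iff$ $e$ in a minimum cut (Remark + Min-Cut Max-Flow); (ii) $\iff$ $e^*$ on a shortest $st$-separating cycle (Property~\ref{pro:cycle}); (iii) $\iff$ $e^*$ on a \emph{single-crossing} shortest $st$-separating cycle (Lemma~\ref{le:crossings}, the nontrivial direction being that any such edge can be rerouted into a single-crossing cycle); (iv) $\iff$ $e^D$ on a path of length $MF$ between some paired vertices $x_i,y_i$ in $D$ (the cutting construction of~\cite{itai-shiloach,reif}, with the $\pi$-duplication handled by the symmetry observation); (v) $\iff$ $e^D$ on a shortest $x_iy_i$-path for some $i\in[k]$ (since $\dist_D(x_i,y_i)=MF$ and these are all the pairs realising $MF$). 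Each biconditional is either immediate or already proved, so the proof is a short concatenation.

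The main obstacle is not any single hard step but making the $\pi$-edge case airtight: one must be careful that the canonical representative $e^D=e^D_x$ does not lose cycles that happen to run along $\pi_y$ rather than $\pi_x$, and that a shortest $x_iy_i$-path through $e^D_x$ genuinely corresponds back to a single-crossing shortest $st$-separating cycle in $G^*$ containing $e^*$ (rather than to some path in $D$ with no valid cycle preimage). I would address this by citing the established fact that shortest $x_iy_i$-paths in $D$ are exactly the images of single-crossing shortest separating cycles, so that both directions of the correspondence are bijective, and then the symmetry argument closes the gap for duplicated edges. Beyond that, the statement follows by transitivity of the equivalences listed above.
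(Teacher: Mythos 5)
Your proposal is correct and follows exactly the route the paper intends: the paper gives no separate proof of Corollary~\ref{cor:main}, stating only that it ``is implied by the above discussion,'' and your chain (Min-Cut Max-Flow and Property~\ref{pro:cycle}, then Lemma~\ref{le:crossings}, then the cutting correspondence of~\cite{itai-shiloach,reif} with the $\pi_x$/$\pi_y$ symmetry argument justifying the canonical choice $e^D=e^D_x$) is precisely that discussion made explicit. No gaps; your treatment of the duplicated $\pi$-edges matches the paper's own symmetry observation.
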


By the above corollary, an edge $e$ of $G$ has vitality 1 if and only if $e^D=ab$ is such that $dist(a,x_i)+dist(b,y_i)+1=MF$ or $dist(b,x_i)+dist(a,y_i)+1=MF$. Therefore, in order to find the edges having vitality 1 one should find in $D$ the shortest path trees rooted at each $x_i$ and each $y_i$. Since the graph is unweighted this can be done in $O(n^2)$ time by performing a BFS visit of $D$ from each $x_i$ and $y_i$. In the following we will show that such BFS's may be restricted to a subgraph of $D$ so that the overall time complexity is $O(n)$.

\subsection{Decomposing the doubled dual graph $D$}

The main result of this subsection is Theorem \ref{th:main_tilde}, that characterizes edges with vitality 1 more precisely. Let's start with some definitions.

From now on, for every $i\in [k]$ we fix an arbitrary shortest $x_iy_i$-path $p_i$, and we assume that $\{p_i\}_{i\in [k]}$ is a set of pairwise single-touch non-crossing shortest paths, where two paths $p$ and $q$ are \emph{single-touch} if $p\cap q$ is a (possibly empty) path. Moreover, we define $U=\bigcup_{i\in [k]}p_i$, see Figure \ref{fig:Left_Right_and_Omega_i}(\subref{fig:U}). Note that the single-touch property and the order in which the $x_i$'s and the $y_i$'s appear on the external face of $D$ imply that $U$ is a forest. By using the result in~\cite{balzotti-franciosa_2}, $U$ can be computed in $O(n)$ worst-case time, details are given in Subsection~\ref{sub:U}.

Each $p_i$ splits $D$ into two parts as shown in the following definition and in Figure~\ref{fig:Left_Right_and_Omega_i}(\subref{fig:Left_Right}). Let $\ell_x$ and $r_x$ be the extremal vertices of $\pi_x$ so that $\ell_x,x_1,x_k,r_x$ appear in this order in $\pi_x$ (possibly $\ell_x=x_1$ and/or $r_x=x_k$). Similarly, let $\ell_y$ and $r_y$ be the extremal vertices of $\pi_y$ so that $\ell_y,y_1,y_k,r_y$ appear in this order in $\pi_y$ (possibly $\ell_y=y_1$ and/or $r_y=y_k$).

For every $i\in [k]$, we define $\Left_i$ as the subgraph of $D$ bounded by the cycle $\pi_y\ell_y,y_i]\circ p_i\circ \pi_x[x_i,\ell_x]\circ \ell$, where $\ell$ is the leftmost $\ell_x \ell_y$-path in $D$. Similarly, we define $\Right_i$ as the subgraph of $D$ bounded by the cycle $\pi_y[y_i,r_y]\circ r \circ \pi_x[r_x,x_i]\circ p_i$, where $r$ is the rightmost $r_x r_y$-path in $D$.

\begin{definition}\label{def:Omega_i}
For every $i\in \{2,\ldots,k-1\}$, we define $\Omega_i=\Right_{i-1}\cap\Left_{i+1}$.  Moreover, we define $\Omega_1=\Left_2$ and $\Omega_k=\Right_{k-1}$ as special cases.
\end{definition}
\noindent
We note that $\Omega_i$ is the subgraph of $D$ between $p_{i-1}$ and $p_{i+1}$, see Figure \ref{fig:Left_Right_and_Omega_i}(\subref{fig:Omega_i}).

\begin{figure}[h]
\captionsetup[subfigure]{justification=centering}
\centering
\begin{subfigure}{4cm}
\begin{overpic}[width=4cm,percent]{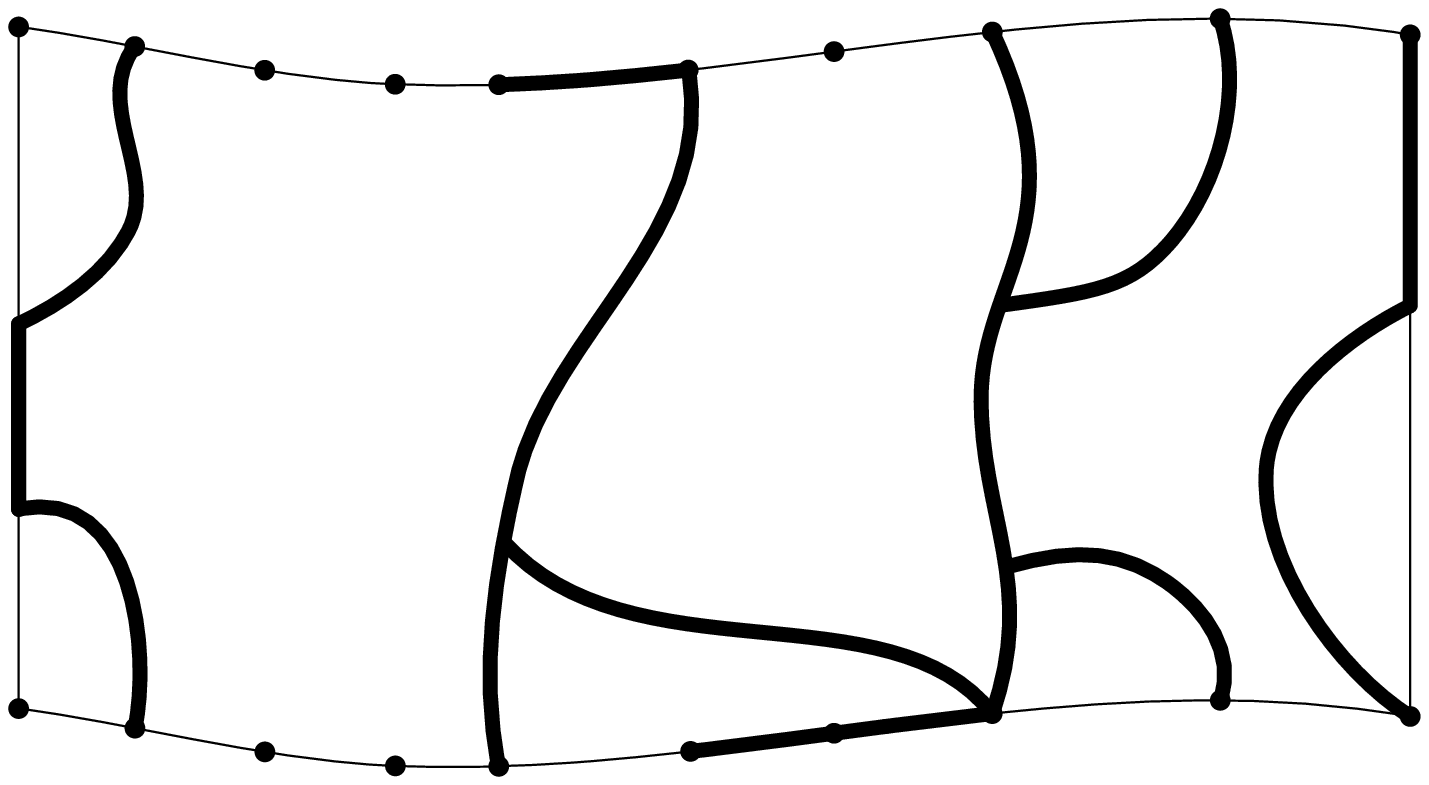}

\put(6.5,-2.9){$x_1$}
\put(8.5,55){$y_1$}
\put(33,-4.5){$x_2$}
\put(33,53){$y_2$}
\put(46.7,-3.5){$x_3$}
\put(46.7,53.6){$y_3$}
\put(67,-1.5){$x_4$}
\put(67,55.7){$y_4$}
\put(83.7,0.2){$x_5$}
\put(84,57){$y_5$}
\put(97,-1){$x_6$}
\put(97,56){$y_6$}
\end{overpic}
\vspace*{-3mm}
\caption{}\label{fig:U}
\end{subfigure}
\qquad
\begin{subfigure}{4cm}
\begin{overpic}[width=4cm,percent]{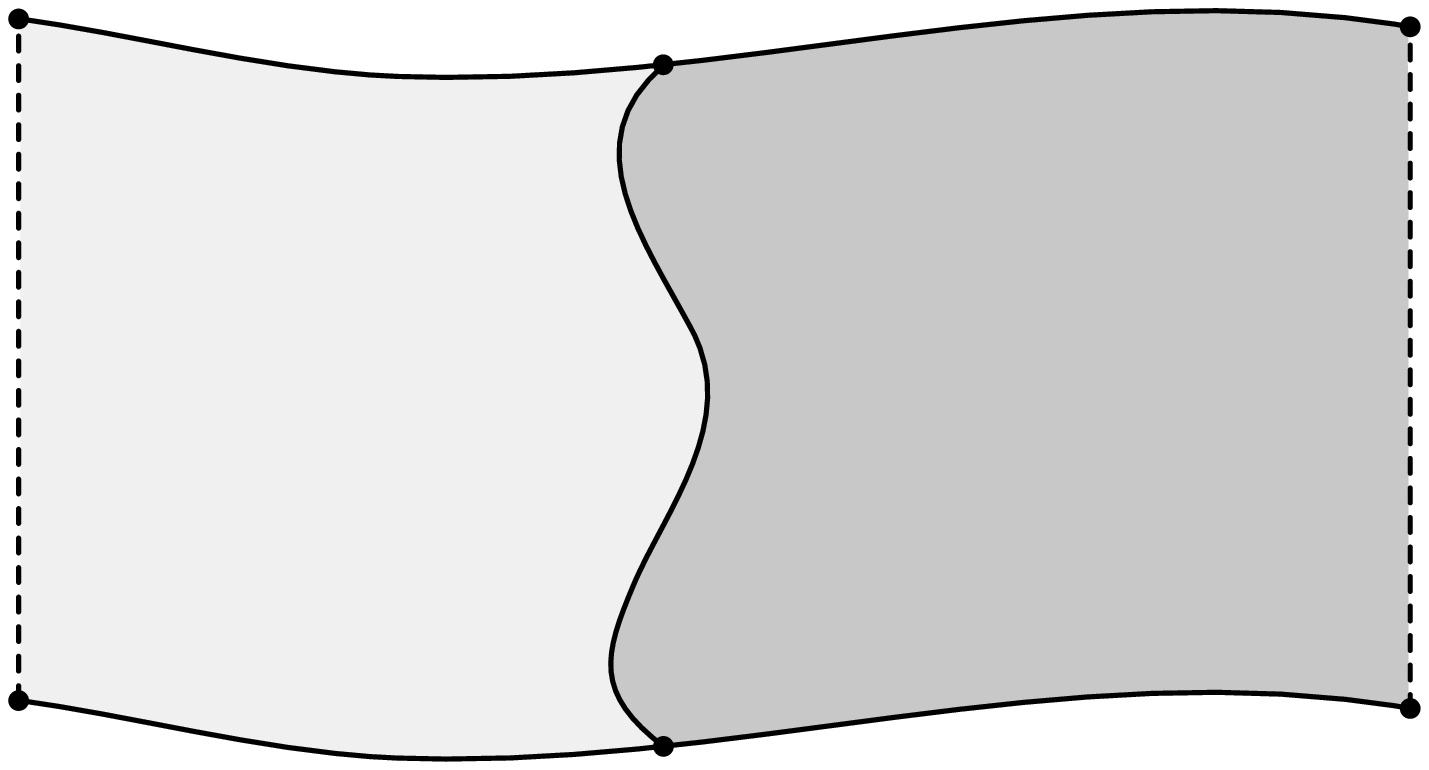}
\put(-1.5,-3.4){$\ell_x$}
\put(0,55.5){$\ell_y$}
\put(44.5,-4.3){$x_i$}
\put(45,53){$y_i$}
\put(97,-2){$r_x$}
\put(98,55){$r_y$}

\put(-5,25){$\ell$}
\put(102,25){$r$}

\put(15,25){$\Left_i$}
\put(66,25){$\Right_i$}
\end{overpic}
\vspace*{-3mm}
\caption{}\label{fig:Left_Right}
\end{subfigure}		
\qquad
\begin{subfigure}{4cm}
\begin{overpic}[width=4cm,percent]{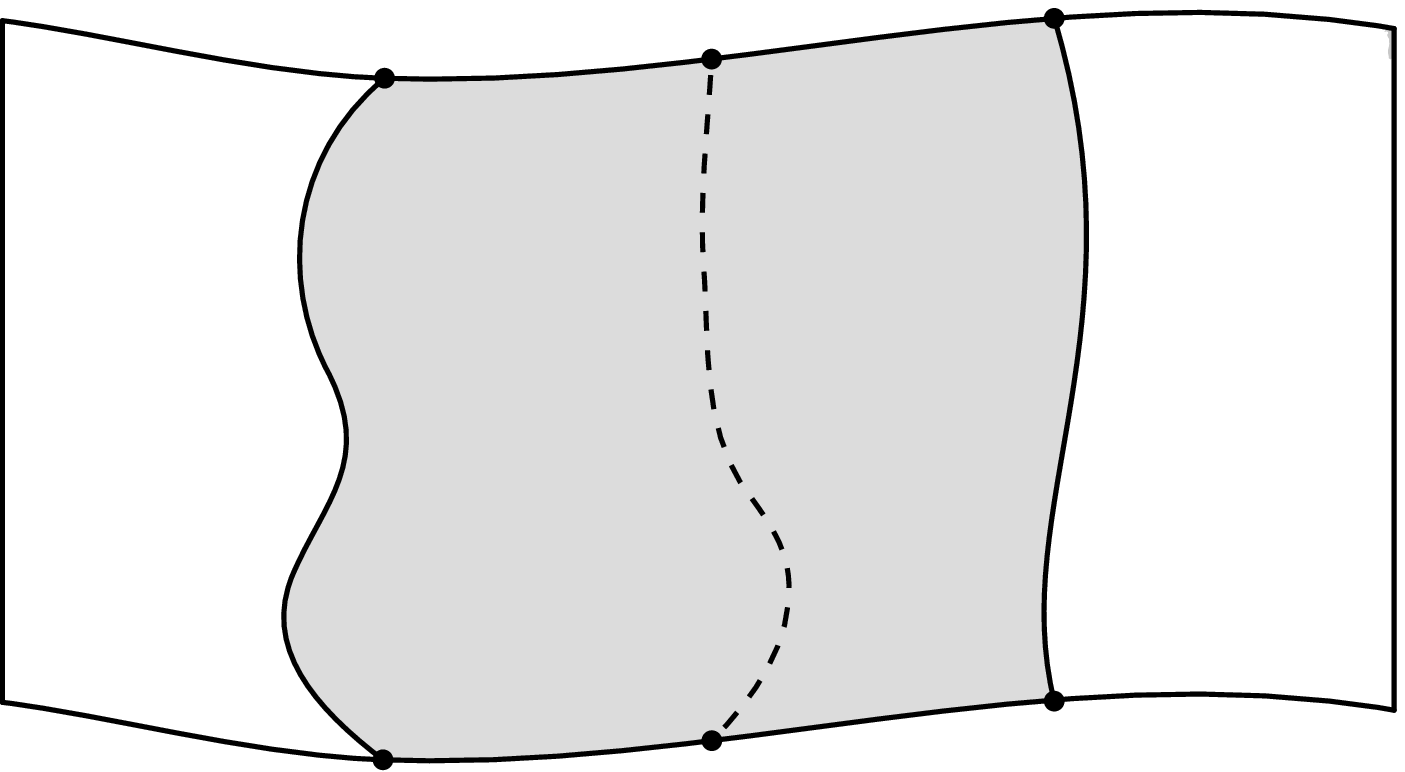}

\put(26,-5){$x_{i-1}$}
\put(26,53){$y_{i-1}$}

\put(49,-3.5){$x_i$}
\put(49.5,54){$y_i$}

\put(73,-1){$x_{i+1}$}
\put(73,57.3){$y_{i+1}$}

\put(6,25){$p_{i-1}$}
\put(42,25){$p_i$}
\put(78,25){$p_{i+1}$}

\put(59,38){$\Omega_i$}

\end{overpic}
\vspace*{-3mm}
\caption{}\label{fig:Omega_i}
\end{subfigure}		
\caption{(\subref{fig:U}) graph $U$ in bold, (\subref{fig:Left_Right}) subgraphs $\Left_i$ and $\Right_i$, (\subref{fig:Omega_i}) subgraph $\Omega_i$, for some $i\in [k]$.}
\label{fig:Left_Right_and_Omega_i}
\end{figure}

\noindent
The following lemma 
is an intermediate step to reach the main result stated in Theorem \ref{th:main_tilde}, showing that we can restrict our attention to shortest paths contained in $\Omega_i$, for $i\in[k]$.

\begin{lemma}\label{lemma:e_in_Omega_j}
An edge $e\in E(G)$ has vitality 1 if and only if $e^D$ belongs to a shortest $x_iy_i$-path in $\Omega_i$, for some $i\in [k]$.
\end{lemma}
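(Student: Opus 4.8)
The plan is to prove both directions through Corollary~\ref{cor:main}, which already tells us that $e$ has vitality 1 if and only if $e^D$ lies on a shortest $x_iy_i$-path in $D$ for some $i\in[k]$. The ``if'' direction is immediate: a shortest $x_iy_i$-path contained in $\Omega_i$ is in particular a shortest $x_iy_i$-path in $D$ (its length is $MF$ because $\Omega_i\subseteq D$ and $dist_D(x_i,y_i)=MF$, and $\Omega_i$ contains the path $p_i$ realizing this distance), so Corollary~\ref{cor:main} gives vitality~1. The real content is the ``only if'' direction: given a shortest $x_iy_i$-path $q$ in $D$ through $e^D$, I must produce a shortest $x_jy_j$-path through $e^D$ that stays inside $\Omega_j$ for some index $j$.

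The key geometric step is a crossing/uncrossing argument against the fixed family $\{p_i\}_{i\in[k]}$ of pairwise non-crossing single-touch shortest paths. First I would locate which ``column'' of $D$ the edge $e^D$ sits in: since the $p_i$'s are non-crossing and are ordered left to right (as the $x_i$'s on $\pi_x$ and $y_i$'s on $\pi_y$ are ordered), they partition $D$ into regions, and $e^D$ lies in the closed region between $p_{j-1}$ and $p_{j+1}$ for an appropriate $j$ — equivalently in $\Omega_j$ as a subgraph — or on one of the boundary paths. (The boundary cases $j=1$ and $j=k$ are handled by the special definitions $\Omega_1=\Left_2$, $\Omega_k=\Right_{k-1}$ together with the observation that $e^D$ cannot help being ``outside'' on the far side, since nothing in $\ell$ or $r$ or beyond $x_1,x_k$ on $\pi_x$ can be on a shortest $x_iy_i$-path — this needs the minimality of $MF$ and that $\ell,r$ are the extremal leftmost/rightmost paths.) Having fixed $j$, I take the shortest $x_jy_j$-path $q$ through $e^D$ given by Corollary~\ref{cor:main} and argue it can be rerouted to lie in $\Omega_j$: wherever $q$ wanders out of $\Omega_j$ it must cross $p_{j-1}$ or $p_{j+1}$, and since $p_{j-1}$ and $p_{j+1}$ are themselves shortest paths, each excursion of $q$ outside $\Omega_j$ can be replaced by the corresponding subpath of $p_{j-1}$ or $p_{j+1}$ without increasing the length — this is the standard fact that the union of two shortest paths between the same pair of ``crossing'' breakpoints can be swapped. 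Because $e^D$ is strictly inside $\Omega_j$ (or on $p_j$, which lies in $\Omega_j$), none of these swaps removes $e^D$, so the rerouted path is a shortest $x_jy_j$-path in $\Omega_j$ through $e^D$.

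The main obstacle I expect is the bookkeeping of the boundary/degenerate situations rather than the core uncrossing idea. Specifically: (i) making precise that $e^D$ genuinely lies in some $\Omega_j$ and not ``off to the side'' of $x_1/y_1$ or $x_k/y_k$ — this requires ruling out that a shortest $x_iy_i$-path uses vertices of $\pi_x$ left of $x_1$ / right of $x_k$ or vertices of $\ell$, $r$, which follows from the extremality of $\ell$ and $r$ and from the fact that such a detour would create a shorter separating cycle or violate $dist_D(x_i,y_i)=MF$; (ii) when $e^D$ lies exactly on a path $p_m$, being careful that $p_m\subseteq\Omega_m$ and $p_m\subseteq\Omega_{m\pm1}$ too, so any of the adjacent indices works; and (iii) the single-touch hypothesis must be invoked to guarantee that ``crossing $p_{j\pm1}$'' is well-defined and that the swapped subpaths glue back into a simple path. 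Once these are nailed down, the length bound $dist(a,x_j)+dist(b,y_j)+1=MF$ (or the symmetric version) transfers verbatim from $D$ to $\Omega_j$, completing the proof.
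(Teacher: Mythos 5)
Your reduction to Corollary~\ref{cor:main}, the trivial ``if'' direction, and the general uncrossing idea all match the paper's proof, but there is a genuine gap in the ``only if'' direction: you conflate the index $i$ supplied by Corollary~\ref{cor:main} with the index $j$ of the region containing $e^D$. Corollary~\ref{cor:main} gives you a shortest $x_iy_i$-path $\rho$ through $e^D$ for \emph{some} $i$, while $e^D$ sits in $\Omega_j\cap\Omega_{j+1}$ for a possibly different $j$ (the paper's Figure~\ref{fig:serpentina} depicts exactly this: a path from $x_i$ that snakes far to the right before using $e^D$). When $i\neq j$, no $x_iy_i$-path through $e^D$ can be contained in $\Omega_i$, so you must manufacture a path with \emph{different endpoints} $x_j,y_j$; your sentence ``I take the shortest $x_jy_j$-path $q$ through $e^D$ given by Corollary~\ref{cor:main}'' assumes such a path exists for the region index $j$, which is essentially what is to be proved. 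Moreover, your rerouting step --- replacing excursions of the path outside $\Omega_j$ by subpaths of $p_{j\pm1}$ with the same extremal vertices --- presupposes that the path already starts at $x_j$ and ends at $y_j$; applied to $\rho$ it would still return an $x_iy_i$-path whose endpoints lie outside $\Omega_j$.

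The missing step, which is the heart of the paper's argument, is the endpoint switch: assuming w.l.o.g.\ $i\leq j$ and $e^D\in\Right_j$, the path $\rho$ must cross $p_j$ a positive even number of times; letting $a,b$ be the extreme vertices of $V(\rho)\cap V(p_j)$ along $p_j$, the path $\sigma=p_j[x_j,a]\circ\rho[a,b]\circ p_j[b,y_j]$ is a shortest $x_jy_j$-path (because $\rho[a,b]$ and $p_j[a,b]$ are shortest $ab$-paths of equal length) that still contains $e^D$. Only after this switch does your excursion-replacement argument apply, and the paper finishes exactly that way. Your remaining worries (locating $e^D$ in some $\Omega_j$, edges lying on a $p_m$) are non-issues --- the three cases $\Omega_1$, $\Omega_k$, $\Omega_j\cap\Omega_{j+1}$ cover all of $D$ by construction --- so the endpoint switch is the one piece of substance your proposal is missing.
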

\begin{proof}
By Corollary \ref{cor:main}, it suffices to prove that an edge $e^D\in E(D)$ belongs to a shortest $x_iy_i$-path in $D$, for some $i\in [k]$, if and only if $e^D$ belongs to a shortest $x_jy_j$-path in $\Omega_j$, for some $j\in [k]$.

The ``if part" is trivial, thus it suffices to prove the ``only if" part. Let $\rho$ be a shortest path in $D$ from $x_i$ to $y_i$, for some $i\in [k]$, such that $e^D\in\rho$. We have to prove that there exists a shortest path $\sigma$ from $x_j$ to $y_j$, for some $j\in [k]$, satisfying $e^D\in E(\sigma)$ and $\sigma\subseteq\Omega_j$.

Let us assume that $\rho\not\subseteq\Omega_i$, otherwise the thesis holds by taking $j=i$. There are three cases: either $e^D\in\Omega_1$, or $e^D\in\Omega_k$, or $e^D\in\Omega_j\cap\Omega_{j+1}$ for some $j\in \{1,\ldots,k-1\}$. We study only the last case, the others being similar.

W.l.o.g., we assume that $i\leq j$. We want to find a shortest path $\sigma$ from $x_j$ to $y_j$ containing all edges belonging to $\rho\cap(\Omega_j\cap\Omega_{j+1})$. Being $i\leq j$ and $e^D\in\Right_j$, then $\rho$ crosses $p_j$ a positive and even number of times. Let $a,b\in V(p_j)\cap V(\rho)$ be the first and last vertex, respectively, on $p_j$ going from $x_j$ to $y_j$, possibly $a=x_j$ and/or $b=y_j$ (see Figure~\ref{fig:serpentina}). Thus the path $\sigma=p_j[x_j,a]\circ\rho[a,b]\circ p_j[b,y_j]$ is a shortest path because $\rho$ and $p_j$ are. If $\sigma\subseteq\Omega_j\cap\Omega_{j+1}=\Right_j\cap\Left_{j+1}$, then the thesis holds. Otherwise $\sigma$ crosses both $p_j$ and/or $p_{j+1}$, and to complete the proof it suffices to replace minimal subpaths in $\Left_j\setminus\Right_j$ by subpaths of $p_j$ with the same extremal vertices and to replace minimal subpaths in $\Right_{j+1}\setminus\Left_{j+1}$ by subpaths of $p_{j+1}$ with the same extremal vertices as shown in Figure \ref{fig:serpentina}; these replacements are univocally determined by the same argument as in Lemma \ref{le:crossings}. After this, $\sigma$ is a shortest $x_jy_j$-path that intersects both $p_j$ and $p_{j+1}$ without crossings, thus $\sigma\subseteq\Omega_j\cap\Omega_{j+1}$.
\end{proof}

\begin{figure}[h]
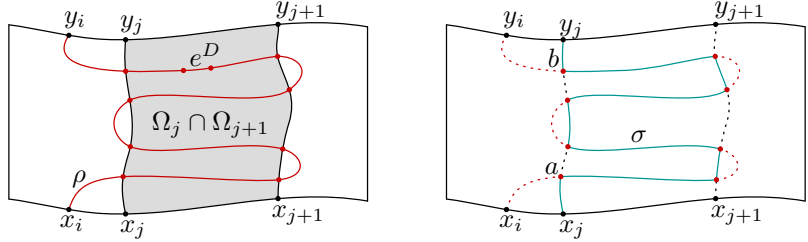

\captionsetup[subfigure]{justification=centering}
\centering
	\begin{subfigure}{10.5cm}
\begin{overpic}[width=10.5cm,percent]{images/serpentina_bis.eps}
\put(6.5,-1.2){$x_i$}
\put(7,24.5){$y_i$}

\put(14,-1.8){$x_j$}
\put(14,24){$y_j$}

\put(33,0){$x_{j+1}$}
\put(33.5,25.8){$y_{j+1}$}

\put(8,4.5){$\rho$}
\put(18,11){$\Omega_j\cap\Omega_{j+1}$}
\put(22.7,19.1){$e^D$}

\put(61.5,-1.2){$x_i$}
\put(62,24.5){$y_i$}

\put(69,-1.8){$x_j$}
\put(69,23.5){$y_j$}

\put(88,0){$x_{j+1}$}
\put(88.5,25.8){$y_{j+1}$}

\put(78,9.5){$\sigma$}

\put(67.2,5.5){$a$}
\put(67.6,19){$b$}

\end{overpic}
\end{subfigure}		
\caption{on the left a shortest $x_iy_i$-path $\rho$ in $D$. On the right a shortest $x_jy_j$-path $\sigma$ in $\Omega_j$.}
\label{fig:serpentina}
\end{figure}

\noindent
By Lemma \ref{lemma:e_in_Omega_j}, the search of edges in shortest $x_iy_i$-paths in $D$ required in Corollary \ref{cor:main} can be restricted to $\Omega_i$, for all $i\in[k]$. Despite this is a simplification, an edge $e^D\in E(D)$ can still belong to many $\Omega_i$'s as shown in Figure~\ref{fig:clessidra}. Thus the total size of all $\Omega_i$'s might be $\Theta(n^2)$. To overcome this problem we will introduce a compact representation of the $\Omega_i$'s---see Definition~\ref{def:omega_tilde}---whose dimension is linear.

As a preliminary step, in the following lemma we show that every edge belonging to more than two $\Omega_i$'s belongs to $U$.

\begin{figure}[h]
\captionsetup[subfigure]{justification=centering}
\centering
	\begin{subfigure}{5cm}
\begin{overpic}[width=5cm,percent]{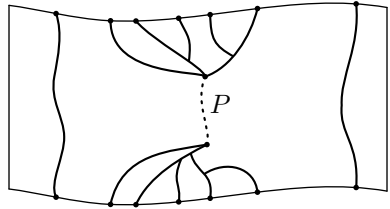}

\put(53,25){$P$}

\end{overpic}
\end{subfigure}		
\caption{in bold edges of $U$, in dotted line a set of edges $P$ belonging to many $\Omega_i$'s.}
\label{fig:clessidra}
\end{figure}

\begin{lemma}\label{lemma:i,j,l}
Let $i,j\in [k]$ satisfy $i+2\leq j$ and let $e^D$ be an edge of $D$, then
\begin{equation*}
e^D\in\Omega_i\cap\Omega_j\Longleftrightarrow e^D\in p_z \text{ for all }z\in[i+1, j-1].
\end{equation*}
\end{lemma}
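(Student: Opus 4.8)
The claim relates membership of an edge $e^D$ in the intersection $\Omega_i \cap \Omega_j$ (for $i+2 \le j$) to membership of $e^D$ in every path $p_z$ with $i+1 \le z \le j-1$. My plan is to prove each direction geometrically, using the nested "curtain" structure of the regions $\Left_z$ and $\Right_z$ and the fact that the paths $p_1,\dots,p_k$ are pairwise non-crossing and single-touch.

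Let me think about what to verify first...

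The plan is to reduce the statement to two elementary facts about the regions $\Left_z$ and $\Right_z$: a monotonicity fact and a ``separation'' fact, after which the lemma becomes pure bookkeeping.

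First I would record the monotonicity of the families $\{\Left_z\}_{z\in[k]}$ and $\{\Right_z\}_{z\in[k]}$: since $p_1,\dots,p_k$ appear in this left-to-right order on the outer face of $D$ and are pairwise non-crossing, we have $\Left_a\subseteq\Left_b$ and $\Right_a\supseteq\Right_b$ whenever $a\le b$. Plugging this into Definition~\ref{def:Omega_i}, and checking separately the boundary cases $i=1$ and $j=k$ (where one of the two bounding paths disappears and is replaced by $\ell$ or $r$), one gets for all $i<j$ the identity
\[
\Omega_i\cap\Omega_j=\Left_{i+1}\cap\Right_{j-1}.
\]
From the same monotonicity it then follows that $\Omega_i\cap\Omega_j\subseteq\Omega_w$ for every $w$ with $i\le w\le j$, because $\Left_{i+1}\subseteq\Left_{w+1}$ (from $i\le w$) and $\Right_{j-1}\subseteq\Right_{w-1}$ (from $w\le j$), and because $\Omega_w\supseteq\Left_{w+1}\cap\Right_{w-1}$ also in the degenerate cases $w=1,k$.

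The second ingredient is the identity $\Left_z\cap\Right_z=p_z$ for every $z\in[k]$, as subgraphs, hence in particular $E(\Left_z)\cap E(\Right_z)=E(p_z)$. This is a planarity statement: by construction $\Left_z$ and $\Right_z$ are exactly the two subgraphs into which the simple path $p_z$, completed along a portion of the outer boundary of $D$, splits $D$, so two such regions can share nothing but the separating path $p_z$ itself; the few degenerate configurations (a $p_z$ lying on a slice border, or bridges) are treated exactly as elsewhere in Section~\ref{se:algo}. Rephrased with the previous identity, this says $\Omega_{z-1}\cap\Omega_{z+1}=\Left_z\cap\Right_z=p_z$.

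With these two facts the lemma is immediate. For the ``$\Leftarrow$'' direction, if $e^D\in p_z$ for all $z\in[i+1,j-1]$, then in particular $e^D\in p_{i+1}\subseteq\Left_{i+1}$ and $e^D\in p_{j-1}\subseteq\Right_{j-1}$, so $e^D\in\Left_{i+1}\cap\Right_{j-1}=\Omega_i\cap\Omega_j$. For the ``$\Rightarrow$'' direction, fix any $z$ with $i+1\le z\le j-1$; since $i\le z-1$ and $z+1\le j$, the monotonicity fact yields $\Omega_i\cap\Omega_j\subseteq\Omega_{z-1}\cap\Omega_{z+1}=p_z$, whence $e^D\in p_z$. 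The step I expect to need the most care is the separation identity $\Left_z\cap\Right_z=p_z$: making the planar-topology argument precise and ruling out the degenerate cases; by contrast, the manipulations with the monotone families $\Left$ and $\Right$ are routine.
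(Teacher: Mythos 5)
Your proof is correct and rests on the same two geometric facts the paper uses, namely the monotone nesting of the regions $\Left_z$, $\Right_z$ induced by the ordered non-crossing paths, and the fact that $\Left_z$ and $\Right_z$ share no edges other than those of $p_z$. The paper packages these as a reduction to $p_{i+1}\cap p_{j-1}$ followed by a case analysis by contradiction, whereas you package them as the identities $\Omega_i\cap\Omega_j=\Left_{i+1}\cap\Right_{j-1}$ and $\Left_z\cap\Right_z=p_z$; this is only a difference in bookkeeping, not in substance.
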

\begin{proof}
The single-touch property implies that $e^D\in p_z \text{ for all }z\in[i+1, j-1]$ if and only if $e^D\in p_{i+1}\cap p_{ j-1}$. Moreover, if $e^D\in p_z$ for all $z\in[i+1, j-1]$, then $e^D\in\Omega_i\cap\Omega_j$ by definition of the $\Omega_\ell$'s. To complete the proof, it suffices to prove that $e^D\in\Omega_i\cap\Omega_j\Longrightarrow e\in p_{i+1}\cap p_{ j-1}$.

Let us assume by contradiction that $e^D\not\in p_{i+1}\cap p_{ j-1}$ and $e\in\Omega_i\cap\Omega_j$. W.l.o.g., we assume that $e^D\not\in p_{i+1}$. There are two cases: either $e^D\in\Right_{i+1}\setminus p_{i+1}$ or $e^D\in\Left_{i+1}\setminus p_{i+1}$. In the first case $e^D\not\in\Left_{i+1}$ and thus $e^D\not\in\Omega_i$, absurdum. In the second case, $e^D\in\Left_{ j-1}\setminus p_{ j-1}$ and thus $e^D\not\in\Omega_{ j}$, absurdum.
\end{proof}

\noindent
After Lemma \ref{lemma:i,j,l}, we can introduce a compact representation of the $\Omega_i$'s.

\begin{definition}\label{def:omega_tilde}
For all $i\in \{2,\ldots,k-1\}$, if $p_{i-1}\cap p_{i+1}$ is a non-empty path $q_i$, then we define $\widetilde{\Omega}_i$ as $\Omega_i$ in which we replace all edges of $q_i$ by a single edge of length $|q_i|$. Otherwise, we define $\widetilde{\Omega}_i=\Omega_i$. Moreover, we define $\widetilde{\Omega}_1=\Omega_1$ and $\widetilde{\Omega}_k=\Omega_k$.
\end{definition}

\noindent
Now we can state the main result, which characterizes edges with vitality 1 more precisely. Indeed, while an edge of $D$ might belong to many $\Omega_i$'s, it belongs either to $U$ or to at most two $\widetilde{\Omega}_i$'s.

\begin{theorem}\label{th:main_tilde}
Let $e\in E(G)$,
\begin{itemize}\itemsep0em
\item if $e^D\in U$, then $e$ has vitality 1,
\item otherwise, there exists a unique $i\in[k]$ satisfying $e^D\in\widetilde{\Omega}_i\cup\widetilde{\Omega}_{i+1}$, moreover, $e$ has vitality 1 if and only if $e^D$ belongs to a shortest $x_iy_i$-path in $\widetilde{\Omega}_i$ and/or $e^D$ belongs to a shortest $x_{i+1} y_{i+1}$-path in $\widetilde{\Omega}_{i+1}$. 
\end{itemize}
\end{theorem}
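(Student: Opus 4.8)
The plan is to split on whether $e^D$ belongs to $U$ or not. For the first bullet, if $e^D\in U=\bigcup_{i\in[k]}p_i$ then $e^D$ is an edge of some $p_i$, which by construction is a shortest $x_iy_i$-path in $D$, so Corollary~\ref{cor:main} immediately gives that $e$ has vitality $1$. All the remaining work is for the case $e^D\notin U$, and the overall strategy there is: (a) locate $e^D$ inside a canonical pair $\widetilde\Omega_i,\widetilde\Omega_{i+1}$; (b) show the contraction of $q_i$ performed in Definition~\ref{def:omega_tilde} does not affect which edges of $G$ lie on shortest $x_iy_i$-paths; (c) plug (a)--(b) into Lemma~\ref{lemma:e_in_Omega_j}.

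For step (a) I would first note that $D=\bigcup_{i\in[k]}\Omega_i$, so $J:=\{i\in[k]:e^D\in E(\Omega_i)\}$ is non-empty, and that $J$ is an interval of consecutive integers: if $a,b\in J$ with $b\ge a+2$ then by Lemma~\ref{lemma:i,j,l} $e^D\in p_z\subseteq U$ for all $z\in[a+1,b-1]$, hence $e^D\in\Omega_z$ for those $z$ as well. Since $e^D\notin U$, the same lemma forbids three consecutive indices in $J$ (that would put $e^D$ on some $p_z$), so $|J|\le2$. If $J=\{i,i+1\}$ take that $i$; if $J$ is a singleton I would argue it can only be $\{1\}$ with $e^D$ strictly left of $p_1$ (take $i=1$) or $\{k\}$ with $e^D$ strictly right of $p_k$ (take $i=k-1$, so $i+1=k$), because an edge outside $U$ sitting in a single middle $\Omega_{i_0}$ is strictly between $p_{i_0-1}$ and $p_{i_0}$, hence also in $\Omega_{i_0-1}$, or strictly between $p_{i_0}$ and $p_{i_0+1}$, hence also in $\Omega_{i_0+1}$. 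In every case we obtain a well-defined $i$ with $\{m:e^D\in E(\Omega_m)\}\subseteq\{i,i+1\}$; and since $e^D\notin U\supseteq q_m=p_{m-1}\cap p_{m+1}$ for every $m$, Definition~\ref{def:omega_tilde} gives $e^D\in E(\widetilde\Omega_m)\iff e^D\in E(\Omega_m)$, the only edge of $\widetilde\Omega_m$ not inherited from $\Omega_m$ being the synthetic one replacing $q_m$. This is the ``unique $i$ with $e^D\in\widetilde\Omega_i\cup\widetilde\Omega_{i+1}$'' part (with the canonical choice above in the two end configurations).

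For step (b), fix $m\in\{i,i+1\}$; if $q_m$ is empty then $\widetilde\Omega_m=\Omega_m$ and there is nothing to prove, so let $q_m$ be a non-trivial $uv$-path with $u$ on the $\pi_x$ side and $v$ on the $\pi_y$ side. The crucial observation is that $q_m$ is a separator of $\Omega_m$ separating $x_m$ from $y_m$: $\Omega_m$ is the region between the pairwise non-crossing single-touch paths $p_{m-1}$ and $p_{m+1}$ (which occur in this left-to-right order around the outer face) and these coincide exactly along $q_m$, so deleting the interior vertices of $q_m$ splits $\Omega_m$ into a piece incident to $\pi_x$ containing $x_m$ and a piece incident to $\pi_y$ containing $y_m$. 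Hence any $x_my_m$-path in $\Omega_m$ can pass from one piece to the other only through $q_m$, and being shortest it is simple, so it traverses the whole of $q_m$ as one contiguous subpath. Therefore contracting $q_m$ to a single edge $\widehat e$ of length $|q_m|$ preserves $\dist(x_m,y_m)$ and is a length-preserving bijection between the shortest $x_my_m$-paths of $\Omega_m$ and those of $\widetilde\Omega_m$ (replace $q_m$ by $\widehat e$, or $\widehat e$ by $q_m$), fixing every edge of $G$ that is not in $q_m$. In particular $e^D$ lies on a shortest $x_my_m$-path of $\Omega_m$ iff it lies on a shortest $x_my_m$-path of $\widetilde\Omega_m$. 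For step (c) I would then invoke Lemma~\ref{lemma:e_in_Omega_j}: $e$ has vitality $1$ iff $e^D$ lies on a shortest $x_my_m$-path in $\Omega_m$ for some $m\in[k]$; such a path uses $e^D$, so $e^D\in E(\Omega_m)$ and hence $m\in\{i,i+1\}$ by step (a); rewriting $\Omega_m$ as $\widetilde\Omega_m$ via step (b) gives exactly ``$e^D$ lies on a shortest $x_iy_i$-path in $\widetilde\Omega_i$ and/or on a shortest $x_{i+1}y_{i+1}$-path in $\widetilde\Omega_{i+1}$'' (in the two degenerate end configurations one of these two alternatives is vacuously false, since $e^D\notin E(\Omega)$ there), the claimed biconditional.

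The main obstacle is the geometric fact used in step (b): that $q_i=p_{i-1}\cap p_{i+1}$ is genuinely an $x_i$–$y_i$ separator of $\Omega_i$, so that the contraction in Definition~\ref{def:omega_tilde} can neither destroy nor create a shortest $x_iy_i$-path through an edge of $G$. This rests on the planar ``two lenses joined by the neck $q_i$'' picture of $\Omega_i$, which in turn uses that $p_{i-1},p_i,p_{i+1}$ are pairwise non-crossing and single-touch and appear in this order on the outer face; a little extra bookkeeping is needed for the fully degenerate cases ($q_i$ meeting $\pi_x$ or $\pi_y$, $q_i$ a single vertex, and $\Omega_1,\Omega_k$ which carry no contraction at all), but all of these only make the separator picture more trivial.
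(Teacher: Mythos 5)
Your proposal is correct and follows essentially the same route as the paper: first bullet via Corollary~\ref{cor:main}, uniqueness of $i$ via Lemma~\ref{lemma:i,j,l}, equivalence of $\Omega_m$ and $\widetilde{\Omega}_m$ under the contraction of $q_m$, and then Lemma~\ref{lemma:e_in_Omega_j}. The only (harmless) divergence is in justifying that the contraction preserves shortest $x_my_m$-path membership: you argue that $q_m$ is an $x_m$--$y_m$ separator of $\Omega_m$, while the paper observes more directly that the internal vertices of $q_m$ have degree two in $\Omega_m$, so any relevant path uses all of $q_m$ or none of it; your step (a) is in fact slightly more careful than the paper's case analysis there.
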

\begin{proof}
The first statement is implied by Corollary \ref{cor:main} and the definition of $U$. To prove the second statement let us assume $e^D\not\in U$. First we prove that there exists a unique $i\in[k]$ satisfying $e^D\in\Omega_i\cup\Omega_{i+1}$. By Definition \ref{def:Omega_i} there are three cases: $e^D\in\Omega_1$, $e^D\in\Omega_k$ and $e^D\in\Omega_j\cap\Omega_{j+1}$ for some $j\in\{1,\ldots,k-1\}$. Lemma \ref{lemma:i,j,l} and $e^D\not\in U$ imply that there are no other cases. If $e^D\in\Omega_1$, then $i=1$; similarly, if $e^D\in\Omega_k$, then $i=k-1$. Finally, if $e^D\in\Omega_j\cap\Omega_{j+1}$, then $i=j$; indeed $e^D\not\in\Omega_j$ for all $j\not\in\{i,i+1\}$ because of Lemma \ref{lemma:i,j,l} and $e^D\not\in U$. Therefore, in all cases, $e^D\in\widetilde{\Omega}_i\cup\widetilde{\Omega}_{i+1}$ and $e^D\not\in\widetilde{\Omega}_j$ for all $j\not\in\{i,i+1\}$ by Definition \ref{def:omega_tilde}.

We note that $e^D$ belongs to a shortest $x_i y_i$-path in $\Omega_i$ if and only if $e^D$ belongs to a shortest $x_i y_i$-path in $\widetilde{\Omega}_i$ because when switching from $\Omega_i$ to $\widetilde{\Omega}_i$ we have replaced all edges in $q_i$ (see Definition \ref{def:omega_tilde}) with a path of the same length. We note that all the internal vertices of $q_i$ have degree two in $\Omega_i$ and thus any path whose extremal vertices are not in $q_i$ contains either all edges or no edges of $q_i$. Thus the thesis follows from Lemma  \ref{lemma:e_in_Omega_j}. 
\end{proof}

\section{A linear time algorithm for computing the vitality of all edges}\label{se:algo}

In this section we show that the max-flow vitality of all edges of an unweighted undirected planar graph can be found in linear time, see Theorem \ref{th:main_vitality_edge_is_linear}. By following the result in Theorem \ref{th:main_tilde}, it suffices to show that $U$ and all the $\widetilde{\Omega}_i$'s can be computed in linear time.

\subsection{Computing $U$}\label{sub:U}

Balzotti and Franciosa \cite{balzotti-franciosa_2} show how to compute the union of a set of non-crossing single-touch shortest paths with extremal vertices $x_i,y_i$ on the external face in an undirected unweighted planar graph in linear time. They solve the general case in which pairs of extremal vertices can be nested arbitrarily, provided that for  any two pairs $(x_i,y_i)$ and $(x_j,y_j)$ they do not appear on the infinite face in the order  $x_i,x_j,y_i,y_j$. This property ensures that non-crossing paths connecting each pair can be found.

In our case the vertices of the external face appear in the order $x_1,x_2,\ldots,x_r,y_r,y_{r-1},\ldots,$ $y_2,y_1$. This fact, together with the single-touch property, implies that the union of the shortest paths is a forest. 

\begin{theorem}[\cite{balzotti-franciosa_2}]\label{th:U_complexity}
Given $D$, we can compute $U=\bigcup_{i\in [k]}p_i$ in $O(n)$ worst-case time, where $p_i$ is a shortest $x_iy_i$-path and $U$ is a forest.
\end{theorem}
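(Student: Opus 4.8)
The statement to prove is Theorem \ref{th:U_complexity}: given $D$, compute $U = \bigcup_{i \in [k]} p_i$ in $O(n)$ time, where the $p_i$ are pairwise single-touch non-crossing shortest $x_iy_i$-paths, and $U$ is a forest. The theorem is attributed to \cite{balzotti-franciosa_2}, so the proof is essentially a reduction: check that the hypotheses of the general algorithm in that paper are satisfied in our setting, and then invoke it.

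Here is how I would structure the argument.

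\medskip
\textbf{Plan.} The proof has three ingredients. First I would recall precisely what the algorithm of Balzotti and Franciosa \cite{balzotti-franciosa_2} does: given an undirected unweighted planar graph with a distinguished face, and a collection of $k$ terminal pairs $(x_i,y_i)$ all lying on that face, subject to the non-interleaving condition (no two pairs appear in the cyclic order $x_i, x_j, y_i, y_j$ on the face), it outputs in $O(n)$ total time a set of pairwise non-crossing single-touch shortest paths, one for each pair, represented as their union. Second, I would verify that our instance meets these hypotheses. The graph $D$ is obtained by cutting the dual $G^*$ of a planar graph along a shortest path, and by construction all the $x_i$'s lie on the copy $\pi_x$ of $\pi$ and all the $y_i$'s lie on the copy $\pi_y$, both of which are on the outer face of $D$; moreover the indexing was chosen so that $x_1, \dots, x_k$ appear in this order along $\pi_x$ and $y_1, \dots, y_k$ in this order along $\pi_y$. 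Hence the cyclic order on the outer face is $\ell_x, x_1, \dots, x_k, r_x, r_y, y_k, \dots, y_1, \ell_y$ (up to the dummy edges), which is the "laminar" / nested-without-interleaving pattern $x_1, \dots, x_k, y_k, \dots, y_1$. In particular no two pairs interleave, so the algorithm applies and produces the desired $\{p_i\}_{i \in [k]}$ and their union $U$ in linear time.

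\medskip
\textbf{The forest claim.} The remaining point is that $U$ is a forest. I would argue this from the combination of the single-touch property and the outer-face order. Suppose for contradiction $U$ contains a cycle; then some two of the paths, say $p_i$ and $p_j$ with $i < j$, together with possibly other paths, enclose a region. The single-touch property says $p_i \cap p_j$ is a (possibly empty) single path, so $p_i$ and $p_j$ cannot by themselves bound a cycle unless they share two disjoint subpaths — impossible. For a cycle formed by three or more paths $p_{i_1}, \dots, p_{i_m}$, one shows using planarity and the fact that all endpoints $x_i$ lie in order on one arc of the outer face while all $y_i$ lie in order on the complementary arc, that consecutive paths along the cycle would have to cross or to violate the single-touch condition — the endpoints simply cannot be arranged on the boundary to close up a face without some interleaving. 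This is essentially the standard observation that a family of pairwise non-crossing single-touch curves joining a laminar family of boundary point-pairs forms a forest; I would cite it as following from \cite{balzotti-franciosa_2} together with the specific order of terminals, rather than reprove it in detail.

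\medskip
\textbf{Main obstacle.} The genuine content is entirely in \cite{balzotti-franciosa_2} — the linear-time construction of a union of non-crossing single-touch shortest paths — so the "hard part" here is not something to be reproven but to be correctly cited and to have its hypotheses matched. The one place where care is needed is the handling of the endpoints of $\pi_x$ and $\pi_y$: the faces $f_s^*$ and $f_t^*$ were split using two dummy edges, so one must make sure the terminals $x_i, y_i$ and the extremal vertices $\ell_x, r_x, \ell_y, r_y$ genuinely lie on a single face of $D$ after this modification, and that the dummy edges do not destroy the ordering or create unwanted interleaving. Once that is checked, the theorem follows immediately by invoking \cite{balzotti-franciosa_2}.
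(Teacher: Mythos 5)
Your proposal is correct and matches the paper's treatment: the paper also proves this theorem purely by invoking the algorithm of the cited reference, checking that the terminal pairs appear on the external face of $D$ in the non-interleaving order $x_1,\ldots,x_k,y_k,\ldots,y_1$, and observing that this order together with the single-touch property forces $U$ to be a forest. Your extra care about the dummy edges and the sketched contradiction for the forest claim are elaborations of points the paper simply asserts, not a different route.
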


\subsection{Computing the $\widetilde{\Omega}_i$'s}

We show how to compute $\widetilde{\Omega}_i$, for all $i\in[k]$. We use the results of Gabow and Tarjan \cite{gabow-tarjan} that allows us to compute the nearest common ancestor (\texttt{nca}($\cdot,\cdot$)) of $m$ pairs of vertices in a tree with $n$ vertices in $O(n+m)$ worst-case time.

\begin{lemma}\label{lemma:Henzinger_in_Omega_i}
It holds that $\sum_{i\in[k]}|E(\widetilde{\Omega}_i)|=O(n)$. Moreover, given $U$, we can compute $\widetilde{\Omega}_i$ for all $i\in [k]$ in $O(n)$ total worst-case time.
\end{lemma}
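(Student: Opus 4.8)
The plan is to bound the total size of the $\widetilde{\Omega}_i$'s by a charging argument and then explain how to materialize each $\widetilde{\Omega}_i$ explicitly from $U$ and $D$ in amortized linear time. For the size bound, I would fix an edge $e^D \in E(D)$ and count how many $\widetilde{\Omega}_i$ contain it. By the second bullet of Theorem~\ref{th:main_tilde}, if $e^D \notin U$ then $e^D$ lies in at most two of the $\widetilde{\Omega}_i$'s, so such edges contribute $O(|E(D)|) = O(n)$ in total. If $e^D \in U$, then in the original $\Omega_i$'s it may appear $\Theta(k)$ times, but the point of passing to $\widetilde{\Omega}_i$ is exactly that a maximal sub-path $q_i = p_{i-1}\cap p_{i+1}$ of $U$ gets collapsed to a single edge. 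So I would argue: for $e^D \in U$, the index set $\{i : e^D \in p_i\}$ is a contiguous interval $[\alpha,\beta]$ (by the single-touch property and the nesting order of the $x_i,y_i$ on the outer face — this is essentially Lemma~\ref{lemma:i,j,l}); among the indices $i$ with $i-1,i+1 \in [\alpha,\beta]$, the edge $e^D$ is absorbed into the collapsed super-edge of $\widetilde{\Omega}_i$, so it contributes $O(1)$ rather than one unit each. The only $\widetilde{\Omega}_i$ in which $e^D$ survives uncollapsed are those with $i$ near the endpoints $\alpha,\beta$ of the interval (a constant number of them, e.g. $i \in \{\alpha,\alpha+1,\beta-1,\beta\}$). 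Summing the $O(1)$ contributions over the at most $|E(U)| = O(n)$ edges of the forest $U$ gives $\sum_i |E(\widetilde{\Omega}_i)| = O(n)$; more carefully, each collapsed super-edge is charged to the pair $(p_{i-1},p_{i+1})$, and the number of such distinct maximal common sub-paths over all $i$ is $O(k) = O(n)$ since $U$ is a forest with $O(n)$ edges.

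For the constructive part, I would first root the forest $U$ and, using the Gabow--Tarjan offline \nca{} algorithm \cite{gabow-tarjan}, compute in $O(n+k)=O(n)$ time the pairwise intersections $p_{i-1}\cap p_{i+1}$ needed for Definition~\ref{def:omega_tilde}: since $p_{i-1}$ and $p_{i+1}$ are root-to-node-ish paths in $U$ (or, more precisely, tree paths), their intersection $q_i$ is a tree path determined by a constant number of \nca{} queries among $x_{i-1},y_{i-1},x_{i+1},y_{i+1}$, and its length $|q_i|$ can be read off from precomputed depths. Then I would describe how to walk the boundary cycles of $\Omega_i = \Right_{i-1}\cap\Left_{i+1}$ (which is bounded by portions of $\pi_x,\pi_y,p_{i-1},p_{i+1}$ per Definition~\ref{def:Omega_i}) and enumerate the edges strictly inside; because $D$ is planar and the boundaries $p_{i-1},p_i,p_{i+1}$ are non-crossing, the interiors of consecutive $\Omega_i$'s overlap only in the collapsible region $q_i \subseteq U$, so the non-$U$ interiors partition (up to the factor $\le 2$ from Theorem~\ref{th:main_tilde}) the edges of $D$; hence a single pass over $D$, attributing each edge $e^D$ to the $\le 2$ indices $i$ it belongs to, produces all the $\widetilde{\Omega}_i$'s, with the collapsed super-edges inserted using the precomputed $q_i,|q_i|$ data. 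The total work is $O(n) + \sum_i |E(\widetilde\Omega_i)| = O(n)$.

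The main obstacle I anticipate is making the enumeration of edges "strictly interior" to each $\widetilde{\Omega}_i$ run in time proportional to $|E(\widetilde{\Omega}_i)|$ rather than to $|E(\Omega_i)|$, i.e. never touching the collapsed sub-path $q_i$ edge-by-edge while still correctly wiring the single replacement super-edge into the right two boundary endpoints of $\widetilde\Omega_i$. This requires knowing, for each $i$, the two endpoints of $q_i$ on $p_{i-1}$ and on $p_{i+1}$ and jumping across the internal degree-two vertices of $q_i$ in $O(1)$; the \nca{} precomputation gives these endpoints, and since the internal vertices of $q_i$ have degree two in $\Omega_i$ (as already observed in the proof of Theorem~\ref{th:main_tilde}) no interior edge of $D$ is lost by skipping them. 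A secondary subtlety is the bookkeeping for an edge $e^D$ that lies on a shared boundary path $p_i$ (rather than strictly interior): such an edge is in $U$, already handled by the first bullet of Theorem~\ref{th:main_tilde}, so it may simply be excluded from the interiors without affecting correctness, but one must be careful that the boundary edges are consistently assigned so that the $O(1)$-per-edge charging still goes through.
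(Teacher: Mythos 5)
Your proposal is correct and follows essentially the same route as the paper: the size bound comes from the same charging argument (edges outside $U$ lie in at most two $\widetilde{\Omega}_i$'s, edges of $U$ in many $\Omega_i$'s form a contiguous index interval by Lemma~\ref{lemma:i,j,l} and get absorbed into $O(k)$ collapsed super-edges), and the construction uses the Gabow--Tarjan offline \nca{} queries on the rooted forest $U$ to obtain each $q_i=p_{i-1}\cap p_{i+1}$ and its length, followed by a BFS of each $\widetilde{\Omega}_i$. The only cosmetic difference is that you compute $p_{i-1}\cap p_{i+1}$ directly from \nca{}'s of the four endpoints while the paper goes through $p_{i-1}\cap p_i$ and $p_i\cap p_{i+1}$ first; your closing remark on skipping the degree-two interior of $q_i$ so the BFS costs $O(|E(\widetilde{\Omega}_i)|)$ rather than $O(|E(\Omega_i)|)$ is a point the paper leaves implicit.
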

\begin{proof}
Let's start proving the first statement. Let $e^D\in E(D)$ belong to $\Omega_i$ for all $i\in I$, for some subset $I\subseteq [k]$. We prove that $e^D$ belongs to at most two $\widetilde{\Omega}_i$'s. If $|I|\leq2$, then the thesis holds. Otherwise,  Lemma~\ref{lemma:i,j,l} implies that $I$ is in the form $I=\{\ell,\ldots,\ell+z\}$ and $e^D\in U$. Hence, by Definition~\ref{def:omega_tilde}, $e^D$ is contracted in all $\widetilde{\Omega}_i$ for $i\in\{\ell+1,\ldots,\ell+z-1\}$ and $e^D$ appears only in $\widetilde{\Omega}_\ell$ and $\widetilde{\Omega}_{\ell+z}$. Thus $\sum_{i\in[k]}|E(\widetilde{\Omega}_i)|=O(n)+O(k)$ because in any $\widetilde{\Omega}_i$'s all contracted edges are replaced by only one edge. Being $k<n$, then $\sum_{i\in[k]}|E(\widetilde{\Omega}_i)|=O(n)$ as we claimed.

Now we prove that we can compute $\widetilde{\Omega}_i$ for all $i\in [k]$ in $O(n)$ total worst-case time. For all $i\in\{2,\ldots,k-1\}$ if $p_{i-1}$ and $p_{i+1}$ intersect, then their intersection is a path because of the single-touch property, and it is contained in $p_i$ by Lemma \ref{lemma:i,j,l}. Thus if $p_{i-1}$ and $p_{i+1}$ intersect, then we need their intersection and its length to compute $\widetilde{\Omega}_i$ in $O(|E(\widetilde{\Omega}_i)|)$ by a BFS visit.

We recall that $U$ is a forest. We compute $O(k)$ nearest common ancestors and we compare them to compute the required intersection paths. Indeed, the extremal vertices of the wanted intersection paths are always among the computed nearest common ancestors as shown in the sequel.

Each tree $T$ in $U$ is the union of a set of paths $p_i$ from $x_i$ to $y_i$: we root each $T$ in vertex $x_{\ell(T)}$, where $\ell(T)$ is the smallest index such that $x_{\ell(T)} \in T$.

For each $i\in\{1,\ldots,k-1\}$, we compute the intersection path between $p_i$ and $p_{i+1}$. We know that if $p_{i}$ and $p_{i+1}$ are in the same tree $T$, then their vertex intersection is not empty. We use all ${4 \choose 2}$ nearest common ancestors between all pairs generated by the extremal vertices of $p_i$ and $p_{i+1}$.  After rooting $T$ in $\ell(T)$, a few configurations are possible that can be distinguished by comparing all ${6 \choose 2}$ possible pairs obtained from the  previous ${4 \choose 2}$ nearest common ancestors. One configuration is shown in Figure~\ref{fig:Gabow_Tarjan}, the other two occur, informally speaking, when the first vertex in both $p_i$ and $p_{i+1}$ lies above $b$ or below $a$, referring to Figure~\ref{fig:Gabow_Tarjan}. 

For each $i\in\{2,\ldots,k-1\}$, we compute the intersection path between $p_{i-1}$ and $p_{i+1}$ (if it exists) by using the ${4 \choose 2}$  nearest common ancestors between all pairs generated by the extremal vertices of $p_{i-1}\cap p_i$ and $p_i\cap p_{i+1}$ (if they exist) and by comparing them as above. 

By Gabow and Tarjan's result \cite{gabow-tarjan}, we spend overall $O(|V(U)|+k)=O(n)$ worst-case time, because we require ${4 \choose 2}$  nearest common ancestor queries for each pair $i,i+1$ and ${4 \choose 2}$  nearest common ancestor queries for each pair $i-1,i+1$. After a proper labeling of vertices in $T$ obtained by a BFS visit starting from $x_{\ell(T)}$, by knowing the extremal vertices of $p_{i-1}\cap p_{i+1}$ we obtain its length in $O(1)$ time. Thus we compute $\widetilde{\Omega}_i$ in $O(|E(\widetilde{\Omega}_i)|)$.
\end{proof}

\begin{figure}[h]
\captionsetup[subfigure]{justification=centering}
\centering
	\begin{subfigure}{5cm}
\begin{overpic}[width=5cm,percent]{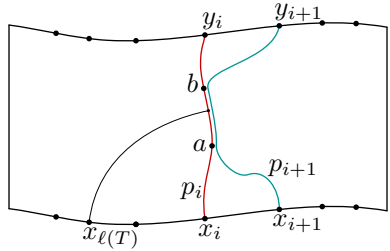}
\put(19.5,-3){$x_{\ell(T)}$}
\put(50,-2){$x_{i}$}
\put(69.5,0.4){$x_{i+1}$}

\put(51,54){$y_{i}$}
\put(70,56.5){$y_{i+1}$}

\put(45.5,9){$p_{i}$}
\put(68.5,15){$p_{i+1}$}

\put(48.5,20.5){$a$}
\put(47,36){$b$}


\end{overpic}
\end{subfigure}		
\caption{in this configuration $p_i\cap p_{i+1}=p_i[a,b]$. It holds that $a=\nca(x_i,x_{i+1})$ and $b=\nca(y_i,y_{i+1})$. This is the only  configuration in which $\nca(x_i,y_i)=\nca(x_{i+1},y_{i+1})$.}
\label{fig:Gabow_Tarjan}
\end{figure}

\subsection{Computing the vitality of all edges in linear time}

We have to determine, for each edge in $\widetilde{\Omega}_i$, whether it belongs to a shortest $x_iy_i$-path. To accomplish this, we perform BFS visits from $x_i$ and $y_i$ to vertices in $\widetilde{\Omega}_i$. Through BFS visits, we can answer the above question in constant time as showed in the following remark.

\begin{remark}\label{remark:distances_in_Omega_j}
Let $e^D$ be an edge in $\widetilde{\Omega}_i$, for some $i\in [k]$. Then $e^D$ belongs to a shortest $x_iy_i$-path in $\widetilde{\Omega}_i$ if and only if $d_{\widetilde{\Omega}_i}(x_i,a)+d_{\widetilde{\Omega}_i}(y_i,b)+1=MF$ or $d_{\widetilde{\Omega}_i}(x_i,b)+d_{\widetilde{\Omega}_i}(y_i,a)+1=MF$, where $a,b\in V(G)$ are the extremal vertices of $e^D$.
\end{remark}

\noindent
The process described in this paper is summarized in algorithm \texttt{EdgeVitality} which computes the $st$ max-flow vitality for all edges of the input graph. Its correctness follows from Theorem~\ref{th:main_tilde} and we prove its complexity in Theorem~\ref{th:main_vitality_edge_is_linear}.

\begin{figure}[h]
\begin{algorithm}[H] 
\SetAlgorithmName{Algorithm \texttt{EdgeVitality}}{}{}
\renewcommand{\thealgocf}{}
 \caption{}
 \KwIn{a undirected unweighted planar graph $G$, a source $s$ and a sink $t$}
 \KwOut{the $st$ max-flow vitality for each edge $e\in E(G)$}
{compute $G^*$, $\pi$ and $D$\label{line:1}\;
compute $U$ by using the algorithm in \cite{balzotti-franciosa_2}\label{line:2}\;
\For{$i \in [k]$}{compute $\widetilde{\Omega}_i$ by using Gabow and Tarjan's algorithm as explained in Lemma \ref{lemma:Henzinger_in_Omega_i}\;
         visit  $\widetilde{\Omega}_i$ by two BFSs from $x_i$ and $y_i$\;
            	\For{$v \in V(\widetilde{\Omega}_i)$}{
      	    compute $d_{\widetilde{\Omega}_i}(x_i,v)$ and $d_{\widetilde{\Omega}_i}(y_i,v)$\label{line:7}\;
      	   }
      	   }
\For{\label{line:8}$e\in E(G)$}{
	use Theorem \ref{th:main_tilde} and Remark \ref{remark:distances_in_Omega_j} to compute $vit(e)$\;
	}
}
\end{algorithm}
\end{figure}

\begin{theorem}\label{th:main_vitality_edge_is_linear}
Let $G$ be an unweighted undirected planar graph, and let $s,t\in V(G)$. Algorithm \texttt{EdgeVitality} computes the $st$ max-flow vitality of all edges in $O(n)$ worst-case time.
\end{theorem}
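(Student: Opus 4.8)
The plan is to account for each line of Algorithm \texttt{EdgeVitality} separately and verify that it runs in $O(n)$ worst-case time, then sum. First I would observe that for a connected planar graph the dual $G^*$ has $O(n)$ vertices and edges (by Euler's formula the number of faces is $O(n)$), so line~\ref{line:1} is linear: computing $G^*$ from a fixed embedding is standard, a shortest path $\pi$ from $f_s^*$ to $f_t^*$ in the unweighted dual is a single BFS, and doubling $\pi$ to form $D$ touches only vertices and edges along $\pi$. Note that $D$ still has $O(n)$ size since doubling a path at most duplicates $O(n)$ edges and splits $O(n)$ vertices.

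Next, line~\ref{line:2} is $O(n)$ directly by Theorem~\ref{th:U_complexity} (the result of \cite{balzotti-franciosa_2}), which also tells us $U$ is a forest. For the main loop over $i\in[k]$, I would invoke Lemma~\ref{lemma:Henzinger_in_Omega_i}: the nearest-common-ancestor preprocessing of Gabow and Tarjan \cite{gabow-tarjan} together with the $O(k)$ \nca{} queries and comparisons described there computes all the intersection paths $p_{i-1}\cap p_{i+1}$ and hence all the $\widetilde\Omega_i$'s in $O(n)$ total time, with $\sum_{i\in[k]}|E(\widetilde\Omega_i)|=O(n)$. The two BFS visits from $x_i$ and $y_i$ inside $\widetilde\Omega_i$, and the inner loop recording the distances $d_{\widetilde\Omega_i}(x_i,v)$ and $d_{\widetilde\Omega_i}(y_i,v)$, each cost $O(|E(\widetilde\Omega_i)|+|V(\widetilde\Omega_i)|)$; summing over $i$ and using $\sum_i|E(\widetilde\Omega_i)|=O(n)$ (and $k<n$, so the contracted single edges contribute $O(n)$ as well) yields $O(n)$ for the whole loop. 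The distance arrays are stored per $\widetilde\Omega_i$, and since their total size is $O(n)$ this is within budget.

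For the final loop on line~\ref{line:8}, each edge $e\in E(G)$ is handled in $O(1)$: by Theorem~\ref{th:main_tilde} either $e^D\in U$, in which case $\mathrm{vit}(e)=1$, or there is a unique $i$ with $e^D\in\widetilde\Omega_i\cup\widetilde\Omega_{i+1}$, and then Remark~\ref{remark:distances_in_Omega_j} decides in constant time whether $e^D$ lies on a shortest $x_iy_i$-path (resp.\ $x_{i+1}y_{i+1}$-path) using the already-computed distance arrays, comparing against $MF=\dist_D(x_1,y_1)$. Determining which case applies, and locating the index $i$, can be arranged in $O(1)$ per edge provided that during the computation of $U$ and the $\widetilde\Omega_i$'s we tag each edge $e^D$ with its membership in $U$ and with the at most two indices $i$ for which $e^D\in\widetilde\Omega_i$; since $\sum_i|E(\widetilde\Omega_i)|=O(n)$ this tagging costs $O(n)$. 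Summing the four contributions gives $O(n)$ worst-case time overall; the space bound follows from the same size estimates.

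The main obstacle I expect is not in any single step but in making the bookkeeping of line~\ref{line:8} genuinely constant-time per edge: one must ensure that the map from an edge $e\in E(G)$ to its dual copy $e^D$ in $D$ (recalling the convention $e^D=e^D_x$ when $e^*\in\pi$, justified by Corollary~\ref{cor:main}), and from $e^D$ to the unique relevant $\widetilde\Omega_i$, is precomputed during lines~\ref{line:1}--\ref{line:7} rather than searched at query time. This is a purely mechanical issue but is where an $O(n\log n)$ factor could sneak in if one is careless; everything else reduces to the already-established Theorems~\ref{th:U_complexity} and Lemma~\ref{lemma:Henzinger_in_Omega_i} and the linearity of BFS on graphs of total size $O(n)$.
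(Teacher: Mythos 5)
Your proof follows the same line-by-line accounting as the paper's own argument, invoking the same ingredients (Theorem~\ref{th:U_complexity}, Lemma~\ref{lemma:Henzinger_in_Omega_i}, linear-time BFS, and $O(1)$ work per edge via Theorem~\ref{th:main_tilde} and Remark~\ref{remark:distances_in_Omega_j}), so it is correct and essentially identical in approach. The only difference is that you spell out the tagging of each $e^D$ with its at most two relevant indices $i$, a bookkeeping detail the paper leaves implicit.
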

\begin{proof}
Line~\ref{line:1} can be executed in $O(n)$ worst-case time. Lines~\ref{line:2} through~\ref{line:7} can be executed in $O(n)$ worst-case time by Theorem \ref{th:U_complexity}, Lemma \ref{lemma:Henzinger_in_Omega_i} and the linear time algorithm for BFS visits in~\cite{henzinger}. Finally, for each edge $e\in E(G)$ the cycle in Line~\ref{line:8} requires $O(1)$ time.
\end{proof}

\section{Conclusions and further work}
\label{se:conclusions}

In this paper we have shown how to efficiently compute the vitality of all edges with respect to $st$ max-flow, in the case of undirected unweighted planar graphs. Note that our result also holds for uniform edge weights. In a previous work Ausiello \emph{et al.}~\cite{ausiello-franciosa_1} presented algorithms for solving the vitality problem for all edges, in the case of general undirected graphs, and for all arcs and nodes, plus contiguous arc sets, for $st$-planar graphs, both directed and undirected.

Several points remain open, most notably the case of planar weighted undirected graphs, for which it is not sufficient to look for minimum $st$-separating cycles in the dual graph, hence it is not possible to rely on the divide and conquer technique by Reif. The technique described in this paper cannot be applied to planar directed graphs either, even in the unweighted case, due to the fact that it is not always possible to avoid crossings between shortest $st$-separating cycles.


\bibliographystyle{siam}
\bibliography{biblio.bib}

\begin{thebibliography}{10}

\bibitem{ahuja-magnanti}
{\sc R.~K. Ahuja, T.~L. Magnanti, and J.~B. Orlin}, {\em Network flows},
  Cambridge, Mass.: Alfred P. Sloan School of Management, Massachusetts, 1988.

\bibitem{ausiello-franciosa_1}
{\sc G.~Ausiello, P.~G. Franciosa, I.~Lari, and A.~Ribichini}, {\em Max flow
  vitality in general and st-planar graphs}, Networks, 74 (2019), pp.~70--78.

\bibitem{balzotti-franciosa_3}
{\sc L.~Balzotti and P.~G. Franciosa}, {\em Max {F}low {V}itality of {E}dges
  and {V}ertices in {U}ndirected {P}lanar {G}raphs}, CoRR, abs/2201.13099
  (2022).

\bibitem{balzotti-franciosa_2}
\leavevmode\vrule height 2pt depth -1.6pt width 23pt, {\em Non-{C}rossing
  {S}hortest {P}aths in {U}ndirected {U}nweighted {P}lanar {G}raphs in {L}inear
  {T}ime}, in Proceedings of 17th International Computer Science Symposium in
  Russia, {CSR} 2022, Russia, Lecture Notes in Computer Science, Springer,
  2022.
\newblock to appear.

\bibitem{borradaile-klein}
{\sc G.~Borradaile and P.~N. Klein}, {\em An ${O}(n \log n)$ {A}lgorithm for
  {M}aximum $st$-{F}low in a {D}irected {P}lanar {G}raph}, J. {ACM}, 56 (2009),
  pp.~9:1--9:30.

\bibitem{eisenstat-klein}
{\sc D.~Eisenstat and P.~N. Klein}, {\em Linear-time algorithms for max flow
  and multiple-source shortest paths in unit-weight planar graphs}, in
  Symposium on Theory of Computing Conference, STOC'13, Palo Alto, CA, USA,
  June 1-4, 2013, D.~Boneh, T.~Roughgarden, and J.~Feigenbaum, eds., {ACM},
  2013, pp.~735--744.

\bibitem{ford-fulkerson_1}
{\sc L.~R. Ford and D.~R. Fulkerson}, {\em Maximal flow through a network},
  Canadian journal of Mathematics, 8 (1956), pp.~399--404.

\bibitem{gabow-tarjan}
{\sc H.~N. Gabow and R.~E. Tarjan}, {\em A {L}inear-{T}ime {A}lgorithm for a
  {S}pecial {C}ase of {D}isjoint {S}et {U}nion}, J. Comput. Syst. Sci., 30
  (1985), pp.~209--221.

\bibitem{hassin}
{\sc R.~Hassin}, {\em Maximum {F}low in $(s, t)$ {P}lanar {N}etworks}, Inf.
  Process. Lett., 13 (1981), p.~107.

\bibitem{hassin-johnson}
{\sc R.~Hassin and D.~B. Johnson}, {\em An ${O}(n \log^2 n)$ {A}lgorithm for
  {M}aximum {F}low in {U}ndirected {P}lanar {N}etwork}, {SIAM} J. Comput., 14
  (1985), pp.~612--624.

\bibitem{henzinger}
{\sc M.~R. Henzinger, P.~N. Klein, S.~Rao, and S.~Subramanian}, {\em Faster
  {S}hortest-{P}ath {A}lgorithms for {P}lanar {G}raphs}, J. Comput. Syst. Sci.,
  55 (1997), pp.~3--23.

\bibitem{itai-shiloach}
{\sc A.~Itai and Y.~Shiloach}, {\em Maximum {F}low in {P}lanar {N}etworks},
  {SIAM} J. Comput., 8 (1979), pp.~135--150.

\bibitem{italiano}
{\sc G.~F. Italiano, Y.~Nussbaum, P.~Sankowski, and C.~Wulff{-}Nilsen}, {\em
  Improved algorithms for min cut and max flow in undirected planar graphs}, in
  Proceedings of the 43rd {ACM} Symposium on Theory of Computing, {STOC} 2011,
  San Jose, CA, USA, 6-8 June 2011, L.~Fortnow and S.~P. Vadhan, eds., {ACM},
  2011, pp.~313--322.

\bibitem{king-rao}
{\sc V.~King, S.~Rao, and R.~E. Tarjan}, {\em A faster deterministic maximum
  flow algorithm}, J. Algorithms, 17 (1994), pp.~447--474.

\bibitem{dagstuhl}
{\sc D.~Kosch{\"{u}}tzki, K.~A. Lehmann, L.~Peeters, S.~Richter,
  D.~Tenfelde{-}Podehl, and O.~Zlotowski}, {\em Centrality indices}, in Network
  Analysis: Methodological Foundations, vol.~3418 of Lecture Notes in Computer
  Science, Springer, 2004, pp.~16--61.

\bibitem{orlin}
{\sc J.~B. Orlin}, {\em Max {F}lows in ${O}(nm)$ {T}ime, or {B}etter}, in
  Proceedings of 45th Symposium on Theory of Computing Conference, STOC'13,
  {ACM}, 2013, pp.~765--774.

\bibitem{reif}
{\sc J.~H. Reif}, {\em Minimum s-t {C}ut of a {P}lanar {U}ndirected {N}etwork
  in ${O}(n \log^2(n))$ {T}ime}, SIAM Journal on Computing, 12 (1983),
  pp.~71--81.

\end{thebibliography}

\end{document}